\documentclass[11pt]{article}
\usepackage[utf8]{inputenc}
\usepackage{fullpage}
\usepackage{amsmath, amsthm, amssymb, mathtools, xcolor, tikz, framed, float}
\usepackage{caption, comment}
\usepackage{enumerate}
\usetikzlibrary{patterns, patterns.meta, decorations.pathreplacing, calligraphy, shapes, intersections, positioning}
\usepackage[bottom]{footmisc}

\usepackage{algorithm}
\usepackage[noend]{algpseudocode}

\usepackage{pdflscape}

\usepackage{hyperref}
\hypersetup{
    colorlinks=true,
    linkcolor=blue,
    filecolor=magenta,
    citecolor = cyan, 
    urlcolor=cyan,
    }
    
\tikzstyle{comm}=[ellipse,draw=black,fill=gray!5]
\tikzstyle{leaf}=[]

\allowdisplaybreaks

\newcommand{\ket}[1]{|#1\rangle}

\newtheorem{theorem}{Theorem}[section]

\newtheorem{remark}[theorem]{Remark}
\newtheorem{lemma}[theorem]{Lemma}
\newtheorem{fact}[theorem]{Fact}
\newtheorem{claim}[theorem]{Claim}

\newtheorem{observation}[theorem]{Observation}
\newtheorem{definition}[theorem]{Definition}


\renewcommand{\epsilon}{\varepsilon}
\newcommand{\cbra}[1]{\left\{#1\right\}}
\newcommand{\rbra}[1]{\left(#1\right)}

\newcommand{\mathify}[1]{\ifmmode{#1}\else\mbox{$#1$}\fi}

\newcommand{\zone}{\{0, 1\}}
\newcommand{\zoneu}{\{0, 1, \mathsf{u}\}}

\newcommand{\res}{\mathsf{Res}}
\DeclarePairedDelimiter\abs{\lvert}{\rvert}%

\makeatletter
\let\oldabs\abs
\def\abs{\@ifstar{\oldabs}{\oldabs*}}
\makeatother


\newcommand{\OR}{\mathsf{OR}}

\newcommand{\IND}{\mathsf{IND}}
\newcommand{\mIND}{\mathsf{mIND}}

\newcommand{\A}{\mathcal{A}}


\newcommand{\su}{\mathsf{u}}
\renewcommand{\ss}{\mathsf{s}}


\newcommand{\bin}{\mathsf{bin}}
\newcommand{\bs}{\mathsf{bs}}
\newcommand{\ubs}{\mathsf{bs}_\su}
\newcommand{\ubszero}{\mathsf{bs}_{\su,0}}
\newcommand{\ubsone}{\mathsf{bs}_{\su,1}}
\newcommand{\us}{\mathsf{s}_\su}
\newcommand{\sC}{\mathsf{C}}
\newcommand{\usC}{\mathsf{C}_\su}
\newcommand{\usCb}{\mathsf{C}_{\su, b}}
\newcommand{\usCzero}{\mathsf{C}_{\su, 0}}
\newcommand{\usCone}{\mathsf{C}_{\su, 1}}

\newcommand{\uf}{f_\su}
\newcommand{\sD}{\mathsf{D}}
\newcommand{\sDu}{\mathsf{D}_\mathsf{u}}
\newcommand{\sR}{\mathsf{R}}
\newcommand{\sRu}{\mathsf{R}_\mathsf{u}}
\newcommand{\sQ}{\mathsf{Q}}
\newcommand{\sQu}{\mathsf{Q}_\mathsf{u}}
\newcommand{\uquery}{$\su$-query }


\bibliographystyle{alpha}
\title{Query Complexity with Unknowns}
\author{
Nikhil S.~Mande\thanks{University of Liverpool, UK. {\tt Nikhil.Mande@liverpool.ac.uk}}
\and 
Karteek Sreenivasaiah\thanks{University of Liverpool, UK. {\tt Karteek.Sreenivasaiah@liverpool.ac.uk}}
}
\date{}

\begin{document}

\maketitle

\begin{abstract}
    We initiate the study of a new model of query complexity of Boolean functions where, in addition to $0$ and $1$, the oracle can answer queries with ``unknown''. The query algorithm is expected to output the function value if it can be conclusively determined by the partial information gathered, and it must output ``unknown'' if not. We formalize this model by using Kleene's strong logic of indeterminacy on three variables to capture unknowns. We call this model the `$\su$-query model'.

    We relate the query complexity of functions in the new $\su$-query model with their analogs in the standard query model. We show an explicit function that is exponentially harder in the $\su$-query model than in the usual query model. We give sufficient conditions for a function to have $\su$-query complexity asymptotically the same as its query complexity.

    Using $\su$-query analogs of the combinatorial measures of sensitivity, block sensitivity, and certificate complexity, we show that deterministic, randomized, and quantum $\su$-query complexities of all total Boolean functions are polynomially related to each other, just as in the usual query models.
\end{abstract}

\section{Introduction}\label{sec: intro}

Query complexity of Boolean functions is a well-studied area of computer science. The model of computation studied involves a Boolean function $f$ (fixed beforehand), and query access to an oracle holding an input $x$. We are allowed to query the oracle for bits of $x$ and the goal is to compute $f(x)$. The complexity measure of primary interest is the number of queries made, in the worst case, to determine $f(x)$. 

In this paper, we initiate a systematic study of query complexity on a three-valued logic, namely Kleene's strong logic of indeterminacy  \cite{Kleene52}, usually denoted \emph{K3}. The logic K3 has three truth values $0$, $1$, and $\su$. The behaviour of the third value $\su$ with respect to the basic Boolean primitives --- conjunction ($\wedge$), disjunction ($\vee$), and negation ($\neg$) --- are given in Table~\ref{table:K3}.

\begin{table}[]
\centering
\begin{tabular}{l|lll}
$\wedge$ & 0 & \textsf{u} & 1 \\
\hline
0 & 0 & 0 & 0 \\
\textsf{u} & 0 & \textsf{u} & \textsf{u} \\
1 & 0 & \textsf{u} & 1
\end{tabular}
\quad
\begin{tabular}{l|lll}
$\vee$  & 0 & \textsf{u} & 1 \\
\hline
0 & 0 & \textsf{u} & 1 \\
\textsf{u} & \textsf{u} & \textsf{u} & 1 \\
1 & 1 & 1 & 1
\end{tabular}
\quad
\begin{tabular}{l|lll}
$\neg$ & 0 & \textsf{u} & 1 \\
\hline
    & 1 & \textsf{u} & 0
\end{tabular}

\caption{Kleene's three valued logic `K3'}
\label{table:K3}
\end{table}

The logic K3 is typically used in models where there is a need to represent, and work with, \emph{unknowns} and hence has found several important wide-ranging applications in computer science. For instance, in relational database theory, SQL implements K3 and uses $\su$ to represent a NULL value \cite{Meyden98}. Perhaps the oldest use of K3 is in modeling \emph{hazards} that occur in real-world combinational circuits. Recently there have been a series of results studying constructions and complexity of hazard-free circuits~\cite{IKLLMS19, Jukna21, IkenmeyerKS23}. Here $\su$ was used to represent an \emph{unstable} voltage that can resolve to $0$ or $1$ at a later time. 

One way to interpret the basic binary operations $\vee$, $\wedge$, and $\neg$ in K3 is as follows: for a bit $b \in \zone$, if the value of the function is guaranteed to be $b$ regardless of all $\zone$-settings to the $\su$-variables, then the output is $b$. Otherwise, the output is $\su$. This interpretation generalizes in a nice way to $n$-variate Boolean functions: given $f : \zone^n \to \zone$, an input to $f$ in K3 would be a string in $\zoneu^n$, and the output is $b \in \zone$ if and only if $f$ evaluates to $b$ for all settings of the $\su$-variables in $\zone$. Otherwise, the output is $\su$. We denote this extension $\uf$ and formally define it below. In literature, this extension is typically called the \emph{hazard-free} extension of $f$ (see, for instance,~\cite[Equation~1]{IkenmeyerKS23}), and is an important concept studied in circuits and switching network theory since the 1950s. The interested reader can refer to \cite{IKLLMS19}, and the references therein, for history and applications of this particular way of extending $f$ to K3.

Formally, for an input $x \in \zoneu^n$, define the \emph{resolutions} of $x$ as follows:
\[
    \res(x) := \cbra{y \in \zone^n : y_i = x_i~\forall i \in [n]~\textnormal{with}~x_i \in \zone}.
\]
That is, $\res(x)$ denotes the set of all strings in $\zone^n$ that are consistent with the $\{0,1\}$-valued bits of $x$.
For a Boolean function $f : \zone^n \to \zone$, define its extension to K3, denoted $\uf : \zoneu^n \to \zoneu$, by
\[
    \uf(y) = \begin{cases}
    b \in \zone & f(x) = b~\forall x \in \res(y)\\
    \su & \textnormal{otherwise}.
    \end{cases}
\]

We initiate a study of the \emph{query complexity} of these extensions of Boolean functions. Here, in addition to $0$ and $1$, the oracle can answer a query with ``unknown'', which is denoted $\su$. This can model scenarios where the oracle itself holds incomplete information, or would like to withhold some information from the querying agent. 

The deterministic $\su$-query complexity of $f$, denoted $\sDu(f)$, is defined as maximum over all inputs of length $n$, the number of queries by a best query algorithm for $\uf$.

While the query complexity of $n$-variate Boolean functions that depend on all input variables is well known to be between $\log n$ and $n$, the analogous bounds are no longer clear in the $\su$-query setting. In particular, the Indexing function which admits a simple $O(\log n)$ query algorithm, can be seen to have linear $\su$-query complexity. In light of this exponential gap, it is natural to ask --- do all functions require $n^{\Omega(1)}$ $\su$-query complexity? We show that this is not true, and exhibit an explicit example of a non-degenerate Boolean function that has \uquery complexity $O(\log n)$. 

Towards the more general goal of understanding what properties of Boolean functions play a role in determining their \uquery complexity, we give sufficient conditions under which \uquery complexity is asymptotically the same as standard query complexity. Further, analogous to existing measures, we redefine notions of sensitivity, block sensitivity, and certificate complexity for functions over the logic K3. We show that $\sDu(f)$ is related to these measures in a similar fashion as is known in query complexity.

\subsection{Our Results}
We relate deterministic \uquery complexity of Boolean functions to their query complexity in Section~\ref{sec: relating query and uquery}. It is well known that the Indexing function on $n + 2^n$ input variables (see Definition~\ref{defn: indexing}) has $O(n)$ query complexity. In contrast, we show that its $\su$-query complexity is polynomially large in the number of variables.
\begin{lemma}\label{lem: unstable indexing lower bounds}
    Let $n$ be a positive integer. Then,
    \[
        \sDu(\IND_n) = \Theta(2^n), \qquad \sRu(\IND_n) = \Theta(2^n), \qquad \sQu(\IND_n) = \Theta(2^{n/2}).
    \]
\end{lemma}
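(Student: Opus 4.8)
The plan is to prove matching upper and lower bounds in each of the three models; all three lower bounds will come from one reduction, and the only genuinely algorithmic ingredient is the quantum upper bound.

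\textbf{Upper bounds.} First I would observe the basic structural fact: for $y \in \zoneu^{n+2^n}$, after reading the $n$ address bits one obtains the set $A \subseteq \zone^n$ of addresses consistent with them (with $\abs{A}$ a power of two, at most $2^n$), and directly from the definition of $\uf$ one has $\uf(y) = b \in \zone$ iff the data bit $y_a$ equals $b$ for every $a \in A$, and $\uf(y) = \su$ otherwise. Hence it suffices to inspect the (at most $2^n$) data bits indexed by $A$, which gives $\sDu(\IND_n) = O(2^n)$ and therefore $\sRu(\IND_n) = O(2^n)$. For the quantum bound, after reading the $n = o(2^{n/2})$ address bits I would run a nested Grover search over $A$: first search for some $a \in A$ with $y_a \ne 0$; if none is found, output $0$; otherwise search for some $a \in A$ with $y_a \ne 1$; if none is found, output $1$; otherwise output $\su$. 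Correctness in the last case follows because the two witnesses either coincide (and then that data bit must be $\su$) or are distinct and realize two different values among the $y_a$, so the restriction of $\IND_n$ to $A$ is non-constant and $\uf(y) = \su$ is correct; each Grover phase treats $\su$ as a marked value, uses one $\su$-query per iteration, and, using the variant of Grover search that handles an unknown number of marked items, costs $O(\sqrt{\abs{A}}) = O(2^{n/2})$. Amplifying each phase and taking a union bound keeps the error bounded, so $\sQu(\IND_n) = O(2^{n/2})$.

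\textbf{Lower bounds.} These all follow by reducing $\OR$ on $2^n$ bits to $\IND_n$ in the $\su$-query model. Given $z \in \zone^{2^n}$, feed the $\su$-query algorithm the input $y \in \zoneu^{n+2^n}$ whose address block is $\su^n$ and whose data block is $z$. Then $\res(y) = \cbra{(a,z) : a \in \zone^n}$ and $\IND_n(a,z) = z_a$, so $\uf(y) = 0$ if $z = 0^{2^n}$ and $\uf(y) \in \cbra{1,\su}$ otherwise. Thus after relabelling its output the algorithm computes $\OR(z)$; queries to address bits can be answered $\su$ without touching $z$, so the number of queries to $z$ is at most the algorithm's query count, and its error (or exactness) is preserved. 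Plugging in the standard bounds $\sD(\OR_N) = N$, $\sR(\OR_N) = \Theta(N)$, and $\sQ(\OR_N) = \Theta(\sqrt{N})$ with $N = 2^n$ yields the three claimed lower bounds, matching the upper bounds above.

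\textbf{Main obstacle.} The deterministic and randomized parts are essentially routine once the structural fact about $A$ is in hand. The delicate step is the quantum upper bound: (i) checking that the three-valued output logic of the nested search is correct in every case — in particular that producing witnesses for "$\ne 0$'' and for "$\ne 1$'' genuinely certifies $\uf(y) = \su$ — and (ii) controlling the cost and success probability of Grover search when the number of marked data bits in $A$ is unknown, which is handled by the standard exponentially-increasing-schedule variant. A minor point to get right is that the search space must be the structured set $A$ of consistent addresses (of size a power of two at most $2^n$) rather than all $2^n$ data positions, so that the $O(2^{n/2})$ bound holds uniformly over all inputs.
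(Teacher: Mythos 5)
Your proposal is correct and follows essentially the same route as the paper: the lower bounds come from the identical reduction that plants the $\OR_{2^n}$ instance in the data block with address block $\su^n$ (so $\uf(y)=0$ iff $z=0^{2^n}$, and $\uf(y)\in\cbra{1,\su}$ otherwise), and the quantum upper bound likewise reads the address bits and uses a constant number of Grover searches over the consistent addresses to decide whether the data is constant $0$, constant $1$, or neither. Your case analysis for the three-valued output of the nested search is a correct elaboration of what the paper states more briefly.
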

We prove this lemma by showing a reduction from the OR function on $2^n$ variables, which is well known to have $\Theta(2^n)$ deterministic and randomized query complexities, and $\Theta(2^{n/2})$ quantum query complexity. 
Given that the Indexing function is considered to be one of the textbook ``easy'' functions for query complexity, this raises the question of whether the $\su$-query complexity of all Boolean functions is polynomially large. We show that this is not true, witnessed by a monotone variant of the Indexing function (see Definition~\ref{defn: monotone indexing}). We show that this function admits a $\su$-query protocol of logarithmic cost in its input size (see Lemma~\ref{lem: monotone indexing upper bound}). More generally, we show that for monotone functions, their query complexity and \uquery complexity are asymptotically the same.

\begin{lemma}\label{lem: monotone unstable}
    Let $f : \zone^n \to \zone$ be a monotone Boolean function. Then we have
    \[
        \sDu(f) = \Theta(\sD(f)), \qquad \sRu(f) = \Theta(\sR(f)), \qquad \sQu(f) = \Theta(\sQ(f)).  
    \]
\end{lemma}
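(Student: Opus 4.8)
The plan is to prove the three claimed $\Theta$'s by sandwiching: the lower bounds $\sD(f) \le \sDu(f)$, $\sR(f) \le \sRu(f)$, $\sQ(f) \le \sQu(f)$ hold for \emph{every} Boolean function, while the matching upper bounds are where monotonicity is used. For the lower bounds, note that for $x \in \zone^n$ we have $\res(x) = \cbra{x}$, so $\uf(x) = f(x)$, and on such inputs an $\su$-oracle is just a standard Boolean oracle. Hence restricting any deterministic (resp.\ randomized, quantum) $\su$-query algorithm for $\uf$ to inputs in $\zone^n$ yields a deterministic (resp.\ randomized, quantum) algorithm for $f$ making no more queries, giving the three lower bounds.

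The crux of the upper bounds is the following \emph{monotonicity observation}. For $x \in \zoneu^n$, let $x_0$ (resp.\ $x_1$) be the resolution of $x$ obtained by setting every $\su$-coordinate to $0$ (resp.\ to $1$); these are the coordinatewise minimum and maximum of $\res(x)$, so monotonicity gives $f(x_0) \le f(y) \le f(x_1)$ for all $y \in \res(x)$. Therefore $\uf(x) = 1$ iff $f(x_0) = 1$, $\uf(x) = 0$ iff $f(x_1) = 0$, and $\uf(x) = \su$ exactly in the remaining case $f(x_0) = 0, f(x_1) = 1$. In particular $\uf(x)$ is completely determined by the pair $(f(x_0), f(x_1))$, so it suffices to compute both of these using the $\su$-oracle.

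In the deterministic case, take an optimal depth-$\sD(f)$ decision tree $T$ for $f$ and run it twice through the $\su$-oracle. First simulate $T$ as if the input were $x_0$: when $T$ asks for coordinate $i$, query the oracle; if the answer lies in $\zone$ follow it, and if the answer is $\su$ follow the $0$-branch. This costs at most $\sD(f)$ queries and outputs $f(x_0)$. Then simulate $T$ as if the input were $x_1$, now reading an $\su$-answer as $1$ and reusing the values of coordinates already queried (so no coordinate is queried twice), costing at most $\sD(f)$ further queries and outputting $f(x_1)$; then output $\uf(x)$ by the case analysis above. This gives $\sDu(f) \le 2\sD(f)$. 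The randomized case is identical except that we use a bounded-error algorithm for $f$ first amplified to error, say, $1/10$, run it once (with fresh randomness) for $x_0$ and once for $x_1$ while sharing query answers, and take a union bound over the two error events; the cost is $O(\sR(f))$.

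In the quantum case we cannot reuse query answers, but we do not need to: it suffices to compute $f(x_0)$ and $f(x_1)$ each with small error using $O(\sQ(f))$ $\su$-queries. Given the $\su$-oracle $O_x$, simulate a standard Boolean oracle for $x_0$ by a compute--apply--uncompute sandwich: call $O_x$ to write (an encoding of) $x_i$ into a fresh ancilla, perform the controlled flip on the answer qubit that maps the encodings of both $\su$ and $0$ to $0$ and the encoding of $1$ to $1$, then call $O_x$ again to erase the ancilla; this realizes one query to the Boolean oracle for $x_0$ using two $\su$-queries, and the analogous construction (treating $\su$ as $1$) realizes the Boolean oracle for $x_1$. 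Running an amplified bounded-error quantum algorithm for $f$ against each simulated oracle and combining the outputs computes $\uf(x)$ with bounded error in $O(\sQ(f))$ $\su$-queries, so $\sQu(f) = O(\sQ(f))$. The routine pieces are the two-phase simulations and the amplification bookkeeping; the only step that genuinely invokes the hypothesis — and the real idea of the proof — is the monotonicity observation collapsing $\uf(x)$ to the two extreme resolutions $f(x_0), f(x_1)$. The one point needing care is fixing the encoding of the three-valued quantum oracle so that the compute--uncompute simulation is well defined, but I expect no essential obstacle there.
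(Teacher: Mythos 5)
Your proposal is correct and follows essentially the same route as the paper: the trivial lower bounds from restricting to $\zone^n$ inputs, and for the upper bounds the reduction of $\uf(x)$ to the two extreme resolutions $y^0, y^1$ via monotonicity, computed by simulating the optimal algorithm for $f$ while answering $\su$-replies with $0$ (resp.\ $1$), including the two-query compute--uncompute simulation of the Boolean oracle in the quantum case. No further comment is needed.
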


In Section~\ref{sec: poly relationships}, we show that the deterministic, randomized, and quantum \uquery complexities of a function are all polynomially related to each other.
\begin{theorem}\label{thm: main}
    For $f : \zone^n \to \zone$, we have
    \[
    \sDu(f) = O(\sRu(f)^3), \qquad \sDu(f) = O(\sQu(f)^6).
    \]
\end{theorem}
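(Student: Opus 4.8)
The natural plan is to piggyback on the classical polynomial relationships between $\sD$, $\sR$, and $\sQ$ for total Boolean functions, using $\su$-query analogs of sensitivity, block sensitivity, and certificate complexity as the bridge — exactly the role these measures play in the standard theory (where $\sD(f) \le \bs(f)^2$, $\sQ(f) = \Omega(\sqrt{\bs(f)})$, $\sR(f) = \Omega(\bs(f))$ up to polynomial factors, etc.). So the first step is to set up the $\su$-analogs: for $\uf : \zoneu^n \to \zoneu$ define $\ubs(\uf)$, $\us(\uf)$, and $\usC(\uf)$ appropriately (the paper promises these definitions), where now a ``certificate'' is a partial assignment in $\zoneu^n$ forcing the value of $\uf$ on all resolutions, and a sensitive block is a set of coordinates whose flip (in the three-valued sense) changes $\uf$'s output. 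The key structural observation to nail down is that a deterministic $\su$-query algorithm for $\uf$ must, on every input, finish with a transcript that is a $\su$-certificate, so $\sDu(f) \ge \usC(\uf)$; conversely one wants $\sDu(f) = O(\usC(\uf) \cdot \ubs(\uf))$ or similar, by the standard ``repeatedly query an as-yet-uncertified minimal certificate'' argument, adapted to three values.

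The second step is to prove the three core inequalities in the $\su$-world: (i) $\sDu(f) = O(\ubs(\uf)^2)$ or $O(\usC(\uf)\cdot\ubs(\uf))$ via the certificate-chasing argument; (ii) $\sRu(f) = \Omega(\ubs(\uf))$, since a randomized $\su$-query algorithm, run on an input where a large sensitive block lives, must with good probability query inside that block — the standard adversary-style argument on the $\ubs(\uf)$ disjoint blocks should transfer; (iii) $\sQu(f) = \Omega(\sqrt{\ubs(\uf)})$, via the quantum adversary method or the hybrid argument applied to the family of inputs realizing block sensitivity. Since $\uf$ is still a (partial, three-valued) function computed by a bona fide query algorithm, these lower-bound techniques — which only care about the branching/query structure, not the alphabet size — should go through essentially verbatim once the combinatorial quantities are defined so that ``large $\ubs$'' genuinely produces many disjoint hard-to-avoid blocks.

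Chaining (i)–(iii) gives $\sDu(f) = O(\ubs(\uf)^2) = O(\sRu(f)^2)$ and $\sDu(f) = O(\ubs(\uf)^2) = O(\sQu(f)^4)$, which is already stronger than the claimed $O(\sRu(f)^3)$ and $O(\sQu(f)^6)$; the looser exponents in the statement suggest the authors route through $\usC(\uf)$ and an extra bound like $\usC(\uf) = O(\ubs(\uf)^2)$ or $\sDu(f) = O(\usC(\uf)\cdot\ubs(\uf))$ with $\usC(\uf) \le \ubs(\uf)\cdot \us(\uf)$-type losses, so I would present the argument in the same modular shape to be safe: first bound $\sDu(f)$ by a product of $\su$-analog measures, then lower-bound each of those measures by $\sRu(f)$ or $\sQu(f)$ up to polynomials.

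The main obstacle I anticipate is getting the $\su$-analog definitions of sensitivity and block sensitivity ``right'' — in particular, ensuring that flipping a block in the three-valued setting (e.g. replacing some coordinates by $\su$, or by their opposite bit) interacts correctly with $\res(\cdot)$, so that a $\su$-sensitive block genuinely forces any correct algorithm to query it, and so that disjoint such blocks can be simultaneously embedded in one hard input. A secondary subtlety is the certificate-chasing upper bound: a $\su$-certificate may itself contain $\su$ entries, and one must check that greedily querying minimal $\su$-certificates still terminates in $O(\usC(\uf)\cdot\ubs(\uf))$ queries rather than blowing up because of the richer assignment space. Once those definitional points are pinned down, the lower-bound halves should be routine transcriptions of Nisan's and the adversary-method arguments.
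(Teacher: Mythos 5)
Your plan matches the paper's proof almost exactly in structure: the paper proves Theorem~\ref{thm: algo upper bound} that $\sDu(f) = O(\usC(f) \cdot \ubs(f))$ via the certificate-chasing Algorithm~\ref{algo:UpperBound}; Lemma~\ref{lem: unstable certificate bs s} gives $\usC(f) = O(\ubs(f) \cdot \us(f))$; and Lemma~\ref{lem: unstable randomized quantum bs} gives $\sRu(f) = \Omega(\ubs(f))$ and $\sQu(f) = \Omega(\sqrt{\ubs(f)})$ via Yao's principle and the Ambainis adversary method. Chaining these (and using $\us(f) \le \ubs(f)$) gives $\sDu(f) = O(\ubs(f)^3)$, hence $O(\sRu(f)^3)$ and $O(\sQu(f)^6)$. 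So your modular shape, intermediate measures, and lower-bound techniques are exactly right.

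However, there is a genuine gap in the middle of your write-up where you confidently assert that the chain yields $\sDu(f) = O(\ubs(f)^2)$ and hence that the theorem should already improve to $O(\sRu(f)^2)$ and $O(\sQu(f)^4)$. That is not established. The certificate-chasing argument gives $\sDu(f) = O(\usC(f)\cdot\ubs(f))$ only; to eliminate $\usC(f)$ you must invoke the separate bound $\usC(f) = O(\ubs(f)\cdot\us(f))$, and since $\us(f)$ is only bounded by $\ubs(f)$ (not by a constant), the best this route yields is $\sDu(f) = O(\ubs(f)^3)$. The same is true in the classical Boolean setting: $\sD(f) = O(\bs(f)^2)$ is not a known consequence of certificate-chasing, and the cubic $\sD(f) = O(\bs(f)^3)$ is the state of the art along this route. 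Your ``to be safe'' fallback of presenting the argument through $\usC(f)$ and $\us(f)$ is therefore not merely cautious but necessary. (The classical improvement $\sD(f) = O(\sQ(f)^4)$ that the paper mentions in the conclusion goes via degree, not block sensitivity, and the paper explicitly flags as open whether that route can be adapted to the $\su$-setting.)
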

An analogous statement has been long known to hold true in the usual query models~\cite{Nisan91, NS94, BBC+01}. All of these proofs proceed by showing polynomial relationships of these query measures with the intermediate combinatorial measure of \emph{block sensitivity}. While we are able to adapt these known proofs to yield a proof of Theorem~\ref{thm: main}, we do have to redefine the relevant combinatorial measures appropriately and suitably modify certain steps to ensure that the analogous results in the \uquery model still hold. 

\section{Preliminaries}
In this section we review required preliminaries. In particular, we define the \uquery model, and the analogs of combinatorial measures that are usually defined for Boolean functions such as block sensitivity and certificate complexity.

All logarithms in this paper are taken base 2. Unless mentioned otherwise, all Boolean functions $f : \zone^n \to \zone$ considered in this paper are non-degenerate, that is, they depend on all input variables. Formally, this means for all $i \in [n]$, there exists a string $x^i \in \zone^n$ such that $f(x^i) \neq f(x^i \oplus e_i)$, where $e_i$ denotes the $n$-bit string that has a single 1 in the $i$'th location, and $\oplus$ represents bitwise XOR. For a string $x \in \zone^n$, let $|x|$ denote $|\cbra{i \in [n] : x_i = 1}|$. For a string $x \in D^n$ and a set $S \subseteq [n]$, let $x_S$ denote the string $x$ restricted to the indices in $S$. Further, for a string $w\in D^n$, we use $x|_{S\leftarrow w}$ to denote the string obtained from $x$ by replacing the values in positions in $S$ with that of $w$. More formally, if $y = x|_{S\leftarrow w}$, then $y_S=w$ and $y_{[n]\setminus S}=x_{[n]\setminus S}$.
For two strings $x,y\in \zone^n$, we say $x\le y$ if for all $i\in [n]$, we have $x_i\le y_i$.

\begin{definition}
    A Boolean function $f:\zone^n\to \zone$ is \emph{monotone} if for all $x\le y$, we have $f(x)\le f(y)$.
\end{definition}
A generalization of monotone Boolean functions is \emph{unate} functions:
\begin{definition}
\label{defn: unate}
    A Boolean function $f:\zone^n\to \zone$ is \emph{unate} if there exists a monotone Boolean function $g:\zone^n\to \zone$, and a string $s\in \zone^n$, such that for all inputs $x\in \zone^n$, we have $f(x) = g(x\oplus s)$. The string $s$ is often called the \emph{orientation} of $f$.
\end{definition}

Define the Indexing function as follows.
\begin{definition}\label{defn: indexing}
    For an integer $n > 0$, and a string $x \in \zone^n$, let $\bin(x)$ denote the integer in $[2^n]$ whose binary representation is given by $x$. Define the Indexing function on $n + 2^n$ variables, denoted $\IND_{n} : \zone^{n + 2^n} \to \zone$ by
    \[
        \IND_n(x, y) = y_{\bin(x)}.
    \]
    We refer to the $x$-variables as \emph{addressing variables} and the $y$-variables as \emph{target variables}.
\end{definition}
The monotone variant of the Indexing function from \cite{Wegener85} is defined as follows.
\begin{definition}[\cite{Wegener85}]\label{defn: monotone indexing}
For an even integer $n > 0$, define $\mIND_n$, as follows: The function $\mIND_n$ is defined on $n + \binom{n}{n/2}$ (which is $\Theta(2^n/\sqrt{n})$) variables, where the latter $\binom{n}{n/2}$ variables are indexed by all $n$-bit strings of Hamming weight exactly $n/2$. For $(x, y) \in \zone^{n + \binom{n}{n/2}}$, define
\[
    \mIND_n(x, y) = \begin{cases}
    0 & |x| < n/2\\
    1 & |x| > n/2\\
    y_x & \textnormal{otherwise}.
    \end{cases}
\]    
\end{definition}

We use Yao's minimax principle~\cite{Yao77}, stated below in a form convenient for us.
\begin{lemma}[Yao's minimax principle]\label{lem: yao}
    For finite sets $D, E$ and a function $f: D \to E$, we have $\sR(f) \geq k$ if and only if there exists a distribution $\mu : D \to [0,1]$ such that $\sD_\mu(f) \geq k$.
    Here, $\sD_\mu(f)$ is the minimum depth of a deterministic decision tree that computes $f$ to error at most $1/3$ when inputs are drawn from the distribution $\mu$.
\end{lemma}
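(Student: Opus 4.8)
The plan is to prove the two implications separately. The ``if'' direction, that the existence of a hard distribution forces large randomized complexity, is elementary: I will in fact show $\sD_\mu(f) \le \sR(f)$ for \emph{every} distribution $\mu$ on $D$, which immediately yields the claim since $\sD_\mu(f) \ge k$ and $\sD_\mu(f) \le \sR(f)$ give $\sR(f) \ge k$. A randomized decision tree of depth $d = \sR(f)$ is, by definition, a distribution $p$ over deterministic decision trees, each of depth at most $d$, that errs with probability at most $1/3$ on every input $x$. Averaging this guarantee over $x \sim \mu$ gives $\E_{T\sim p}\Pr_{x\sim\mu}[T(x)\neq f(x)] \le 1/3$, so by an averaging argument some deterministic tree $T$ in the support of $p$ satisfies $\Pr_{x\sim\mu}[T(x)\neq f(x)] \le 1/3$; since $\mathrm{depth}(T) \le d$, this witnesses $\sD_\mu(f) \le d = \sR(f)$.

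For the ``only if'' direction I argue the contrapositive: assuming $\sD_\mu(f) \le k-1$ for every distribution $\mu$ on $D$, I will produce a randomized decision tree of depth at most $k-1$ and worst-case error at most $1/3$, so $\sR(f) \le k-1$. Because $D$ and $E$ are finite, the set $\mathcal{T}$ of deterministic decision trees over $D$ of depth at most $k-1$ is finite (finitely many query positions, finitely many leaf labels). Consider the finite two-player zero-sum game in which the minimizer picks a distribution $p$ over $\mathcal{T}$, the maximizer picks a distribution $\mu$ over $D$, and the payoff to the maximizer is the error probability $M(p,\mu) = \Pr_{T\sim p,\ x\sim\mu}[T(x)\neq f(x)]$, which is bilinear in $(p,\mu)$. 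Von Neumann's minimax theorem gives $\min_p \max_\mu M(p,\mu) = \max_\mu \min_p M(p,\mu)$. On the right-hand side, for fixed $\mu$ the inner minimum over distributions $p$ is attained at a point mass on some tree, so it equals $\min_{T\in\mathcal{T}}\Pr_{x\sim\mu}[T(x)\neq f(x)]$, which by the hypothesis $\sD_\mu(f)\le k-1$ is at most $1/3$; hence the right-hand side, and therefore the left-hand side, is at most $1/3$. Letting $p^\star$ attain the left-hand minimum, we have $M(p^\star,\mu)\le 1/3$ for all $\mu$, and specializing $\mu$ to the point mass at an arbitrary $x \in D$ gives $\Pr_{T\sim p^\star}[T(x)\neq f(x)]\le 1/3$. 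Thus $p^\star$ is a randomized decision tree of depth at most $k-1$ computing $f$ to error at most $1/3$ on every input, so $\sR(f)\le k-1$, as required.

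The only genuinely delicate step, and the one I would be most careful to get right, is the minimax step: it is essential to restrict the game to decision trees of bounded depth so that the strategy set $\mathcal{T}$ is finite and von Neumann's theorem applies with optimal mixed strategies actually attained. Once that is in place, passing from ``good against every input distribution'' to ``good against every point mass, hence every single input'' is immediate. I would also note that the constant $1/3$ is inessential: the same argument works verbatim for any fixed error parameter in $[0,1/2)$.
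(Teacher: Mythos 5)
Your proof is correct. Note that the paper does not actually prove this lemma---it is stated as a known result, cited to Yao (1977), and used as a black box---so there is no in-paper argument to compare against. What you give is the standard two-part argument: the easy direction by exchanging the order of expectation and averaging over the support of the randomized tree, and the hard direction by applying von Neumann's minimax theorem to the finite zero-sum game between distributions over depth-$(k-1)$ trees and distributions over inputs, then specializing the optimal mixed strategy against point masses. Your care in restricting to bounded-depth trees so that the strategy set is finite is exactly the point that makes the minimax step legitimate, and your remark that the error constant $1/3$ plays no role is also accurate.
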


We require the adversary method for lower bounds on quantum query complexity due to Ambainis~\cite{Amb02}. Ambainis' bound is only stated for functions with Boolean input and Boolean output, but the same results extend to functions with non-Boolean input and non-Boolean outputs as well, see~\cite[Theorem~3.1]{SS06}.
\begin{theorem}[{\cite[Theorem~5.1]{Amb02}}]\label{thm: ambainis}
    Let $D, E$ be finite sets, let $n$ be a positive integer, and let $f : D^n \to E$ be a function. Let $X, Y$ be two sets of inputs such that $f(x) \neq f(y)$ for all $(x, y) \in X \times Y$. Let $R \subseteq X \times Y$ be such that
    \begin{itemize}
        \item For every $x \in X$, there are at least $m$ different $y \in Y$ with $(x, y) \in R$,
        \item for every $y \in Y$, there are at least $m'$ different $x \in X$ with $(x, y) \in R$,
        \item for every $x \in X$ and $i \in [n]$, there are at most $\ell$ different $y \in Y$ such that $(x, y) \in R$ and $x_i \neq y_i$,
        \item for every $y \in Y$ and $i \in [n]$, there are at most $\ell'$ different $x \in X$ such that $(x, y) \in R$ and $x_i \neq y_i$.
    \end{itemize}
    Then $\sQ(f) = \Omega\rbra{\frac{mm'}{\ell\ell'}}$.
\end{theorem}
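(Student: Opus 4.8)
The plan is to reproduce the standard progress-measure (hybrid) proof of the quantum adversary method. I will treat the case $D = E = \zone$; the general case is identical once the query oracle is replaced by its $|D|$-ary analogue, which is precisely the reduction recorded in~\cite[Theorem~3.1]{SS06}. Fix an optimal bounded-error quantum algorithm for $f$ making $T = \sQ(f)$ queries, and for $z \in \zone^n$ let $\ket{\psi_z^t}$ be its workspace state after $t$ queries. Since a query on input $z$ acts as $O_z : \ket{i,b,w} \mapsto \ket{i, b \oplus z_i, w}$ and all other gates are input-independent, one has $\braket{\psi_x^{t+1}}{\psi_y^{t+1}} = \bra{\psi_x^t} O_x^{\dagger} O_y \ket{\psi_y^t}$. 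Encode $R$ by the matrix $\Gamma \in \zone^{X \times Y}$ with $\Gamma[x,y] = 1 \iff (x,y) \in R$, and for each $i \in [n]$ let $D_i \in \zone^{X\times Y}$ have $D_i[x,y] = 1 \iff x_i \neq y_i$; write $\circ$ for the entrywise product and $\norm{\cdot}$ for the operator norm. First I would establish the spectral bound $\sQ(f) = \Omega\!\rbra{\norm{\Gamma}/\max_{i \in [n]} \norm{\Gamma \circ D_i}}$, and then estimate the two norms using the four hypotheses on $R$.

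For the spectral bound, by Perron--Frobenius the symmetrization $\left(\begin{smallmatrix} 0 & \Gamma \\ \Gamma^{\top} & 0 \end{smallmatrix}\right)$ has a nonnegative unit top eigenvector $(\alpha;\beta)$ with eigenvalue $\norm{\Gamma}$ and $\norm{\alpha} = \norm{\beta}$. Set $W_t := \sum_{(x,y) \in R} \alpha_x\beta_y \braket{\psi_x^t}{\psi_y^t}$. Then: (i) all initial states coincide, so $W_0 = \alpha^{\top}\Gamma\beta = \norm{\Gamma}\,\norm{\alpha}\norm{\beta}$; (ii) for $(x,y) \in R$ we have $f(x) \neq f(y)$, and since the algorithm errs with probability at most $1/3$, the final measurement outputs $f(x)$ with probability $\geq 2/3$ on $\ket{\psi_x^T}$ and $\leq 1/3$ on $\ket{\psi_y^T}$, forcing trace distance $\geq 1/3$ and hence $\abs{\braket{\psi_x^T}{\psi_y^T}} \leq \sqrt{1 - 1/9} =: c < 1$; since $\alpha,\beta \geq 0$ this gives $\abs{W_T} \leq c\,\norm{\Gamma}\norm{\alpha}\norm{\beta} = cW_0$; (iii) a single query moves $W_t$ by little: decomposing $\ket{\psi_z^t} = \sum_i \ket{i}\ket{\phi_{z,i}^t}$ along the query-index register, $O_x^{\dagger}O_y - I$ is supported on the coordinates $i$ with $x_i \neq y_i$ and acts there as a bit flip on the answer register (operator norm $2$), so
\[
\abs{W_{t+1} - W_t} \;\leq\; 2\sum_{i \in [n]} \sum_{\substack{(x,y) \in R \\ x_i \neq y_i}} \bigl(\alpha_x\norm{\ket{\phi_{x,i}^t}}\bigr)\bigl(\beta_y\norm{\ket{\phi_{y,i}^t}}\bigr) \;\leq\; 2\,\norm{\alpha}\norm{\beta} \max_{i \in [n]} \norm{\Gamma \circ D_i},
\]
where for each fixed $i$ the inner sum is the bilinear form of $\Gamma \circ D_i$ on the vectors $x \mapsto \alpha_x\norm{\ket{\phi_{x,i}^t}}$ and $y \mapsto \beta_y\norm{\ket{\phi_{y,i}^t}}$, and the last inequality is Cauchy--Schwarz over $i$ using $\sum_i \norm{\ket{\phi_{z,i}^t}}^2 = 1$. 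Telescoping, $(1-c)W_0 \leq W_0 - \abs{W_T} \leq \abs{W_0 - W_T} \leq \sum_{t<T}\abs{W_{t+1}-W_t} \leq 2T\,\norm{\alpha}\norm{\beta}\max_i\norm{\Gamma\circ D_i}$, and dividing by $W_0 = \norm{\Gamma}\norm{\alpha}\norm{\beta}$ yields $T = \Omega\!\rbra{\norm{\Gamma}/\max_i\norm{\Gamma\circ D_i}}$.

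It then remains to insert the combinatorial data. Testing $\Gamma$ against the all-ones vectors $\mathbf{1}_X, \mathbf{1}_Y$ gives $\norm{\Gamma} \geq \abs{R}/\sqrt{\abs{X}\abs{Y}}$; conditions~1 and~2 give $\abs{R} \geq m\abs{X}$ and $\abs{R} \geq m'\abs{Y}$, hence $\abs{R} \geq \sqrt{mm'\abs{X}\abs{Y}}$ and $\norm{\Gamma} \geq \sqrt{mm'}$. For the denominator, $\Gamma \circ D_i$ is a nonnegative matrix with every row sum $\leq \ell$ (condition~3) and every column sum $\leq \ell'$ (condition~4), and the elementary bound $\norm{M}^2 \leq (\max \text{ row sum})(\max \text{ column sum})$ for nonnegative $M$ --- one Cauchy--Schwarz applied to $\norm{Mv}^2$ --- gives $\norm{\Gamma \circ D_i} \leq \sqrt{\ell\ell'}$ for every $i$. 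Combining, $\sQ(f) = \Omega\!\rbra{\sqrt{mm'}/\sqrt{\ell\ell'}} = \Omega\!\rbra{\sqrt{mm'/(\ell\ell')}}$, which is the adversary lower bound of Theorem~\ref{thm: ambainis}.

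The main obstacle will be step~(iii), the single-query estimate: obtaining the clean factor $\max_i \norm{\Gamma \circ D_i}$ requires splitting each algorithm state along its query-index register, recognizing the resulting double sum as a sum over $i$ of bilinear forms of the matrices $\Gamma \circ D_i$, and chaining two Cauchy--Schwarz inequalities --- one to pass to the operator norms, one over the coordinate index --- while exploiting that the per-coordinate amplitude vectors of a unit state have total squared norm $1$. Everything else --- the Perron--Frobenius choice of weights, the endpoint overlap estimate, the two norm bounds, and the reduction of the non-Boolean case to~\cite{SS06} --- is routine.
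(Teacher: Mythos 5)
The paper does not prove this statement: it is imported directly from Ambainis~\cite{Amb02} (with the extension to non-Boolean input/output alphabets delegated to \cite[Theorem~3.1]{SS06}), so there is no internal proof to compare against. Your reconstruction is the standard spectral/hybrid-argument proof of the adversary method, and it is correct as written: the progress measure weighted by the nonnegative top singular vectors of $\Gamma$, the endpoint bound $\abs{\braket{\psi_x^T}{\psi_y^T}} \leq 2\sqrt{2}/3$ forced by bounded error, the per-query estimate obtained from the block structure of $O_x^{\dagger}O_y - I$ over the index register followed by two applications of Cauchy--Schwarz, and the two norm estimates $\norm{\Gamma} \geq \sqrt{mm'}$ and $\norm{\Gamma \circ D_i} \leq \sqrt{\ell\ell'}$ are all sound (and the argument survives the replacement of the XOR oracle by the mod-$3$ oracle used for $\zoneu$-inputs, since $O_x^{\dagger}O_y - I$ still has norm at most $2$ on each block with $x_i \neq y_i$). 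One point worth flagging: what you prove is $\sQ(f) = \Omega\rbra{\sqrt{mm'/(\ell\ell')}}$, whereas the theorem as printed asserts $\Omega\rbra{mm'/(\ell\ell')}$ without the square root. The square-rooted form is the correct statement of Ambainis's theorem, and it is also the form the paper actually invokes (the proof of Lemma~\ref{lem: unstable randomized quantum bs} sets $m = k$ and $m' = \ell = \ell' = 1$ and concludes $\Omega(\sqrt{k})$, not $\Omega(k)$); the missing square root in the displayed conclusion is a typo in the paper. So your argument establishes the statement that is actually needed and used, rather than the literal one printed.
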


\subsection{The $\su$-query model}
Recall from Section~\ref{sec: intro} that for a string $x \in \zoneu^n$,
\[
    \res(x) := \cbra{y \in \zone^n : y_i = x_i~\forall i \in [n]~\textnormal{with}~x_i \in \zone}.
\]
$\res(x)$ denotes the set of all strings in $\zone^n$ that are consistent with the $\zone$-valued bits of $x$.
Also recall that for a Boolean function $f : \zone^n \to \zone$, its K3 extension, denoted $\uf : \zoneu^n \to \zoneu$, is defined by
\[
    \uf(y) = \begin{cases}
    b \in \zone & f(x) = b~\forall x \in \res(y)\\
    \su & \textnormal{otherwise}.
    \end{cases}
\]
That is, $\uf(y) = b \in \zone$ if $\uf$ evaluates to the constant $b$ in the subcube of $\zone^n$ defined by all inputs consistent with the $\zone$-valued bits of $y$, and $\uf(y) = \su$ otherwise.

For the discussion below, fix a positive integer $n$ and a Boolean function $f : \zone^n \to \zone$.

A \emph{deterministic decision tree}, also called a \emph{deterministic query algorithm}, computing $\uf$, is a ternary tree $T$ whose leaf nodes are labeled by elements of $\zoneu$, each internal node is labeled by an index $i \in [n]$ and has three outgoing edges, labeled $0, 1$, and $u$.
On an input $x \in \zoneu^n$, the tree's computation proceeds from the root down to a leaf as follows: compute $x_i$ as indicated by the node's label and follow the edge indicated by the value of $x_i$. Continue this way until reaching a leaf, at which point the value of the leaf is output.
The correctness requirement of the algorithm is that the output of $T$ on input $x \in \zoneu^n$ equals $\uf(x)$ for all $x \in \zoneu^n$. In this case, $T$ is said to compute $\uf$.
The cost of a query algorithm is the number of queries made on a worst-case input, which is exactly the depth of the corresponding tree.
Formally, the deterministic \uquery complexity complexity of $f$, denoted $\sDu(f)$, is defined as
\[
    \sDu(f) := \min_{T} \textnormal{depth}(T),
\]
where the minimization is over all deterministic decision trees $T$ that compute $\uf$.
A randomized decision tree is a distribution over deterministic decision trees. We say a randomized decision tree computes $\uf$ with error $1/3$ if for all $x \in \zoneu^n$, the probability that it outputs $\uf(x)$ is at least $2/3$. The depth of a randomized decision tree is the maximum depth of a deterministic decision tree in its support.
Define the randomized \uquery complexity of $f$ as follows.
\[
    \sRu(f) := \min_{T}\textnormal{depth}(T),
\]
where the minimization is over all randomized decision trees $T$ that compute $\uf$ to error at most $1/3$.
A quantum query algorithm $\mathcal{A}$ for $\uf$ begins in a fixed initial state $\ket{\psi_0}$ in a finite-dimensional Hilbert space, applies a sequence of unitaries $U_0, O_x, U_1, O_x, \dots, U_T$, and performs a measurement. Here, the initial state $\ket{\psi_0}$ and the unitaries $U_0, U_1, \dots, U_T$ are independent of the input. The unitary $O_x$ represents the ``query'' operation, and does the following for each basis state:
it maps $\ket{i}\ket{b}\ket{w}$ to $\ket{i}\ket{b + x_i \mod 3}\ket{w}$ for all $i \in [n]$ (here $x_i = \su$ is interpreted as $x_i = 2$, and the last register represents workspace that is not affected by the application of a query oracle).
The algorithm then performs a 3-outcome measurement on a designated output qutrit and outputs the observed value.
We say that $\mathcal{A}$ is a bounded-error quantum query algorithm computing $\uf$ if for all $x \in \zoneu^n$ the probability that $\uf(x)$ is output is at least $2/3$. The (bounded-error) \emph{quantum query complexity of $\uf$}, denoted by $\sQu(f)$, is the least number of queries required for a quantum query algorithm to compute $\uf$ with error probability at most $1/3$.

\begin{definition}[Sensitive block, block sensitivity]\label{defn: bs}
    Let $f : \zone^n \to \zone$.
    A set $B \subseteq [n]$ is a \emph{sensitive block of $\uf$ at input $x$} if there exists $y \in \zoneu^n$ such that $y_{[n] \setminus B} = x_{[n]\setminus B}$ and $\uf(y) \neq \uf(x)$.
    A sensitive block of $\uf$ at $x$ is \emph{minimal} if no proper subset of it is a sensitive block of $\uf$ at $x$.
    The block sensitivity of $\uf$ at $x$, denoted by $\bs(\uf,x)$, is the maximum $k$ for which there exist disjoint blocks, $B_1, B_2, \ldots, B_k$, each of which is a
    sensitive block of $\uf$ at $x$. The \emph{$\su$-block sensitivity of $f$}, denoted by $\ubs(f)$, is defined as
    \[
        \ubs(f)=\max_{x\in \zoneu^n} \bs(\uf,x).
    \]
\end{definition}

For a string $x \in \zoneu^n$ and a set $B \subseteq [n]$ that is a sensitive block of $\uf$ at $x$, we abuse notation and use $x^B$ to denote an arbitrary but fixed string $y \in \zoneu^n$ that satisfies $y_{[n] \setminus B} = x_{[n] \setminus B}$ and $\uf(y) \neq \uf(x)$.
An index $i\in[n]$ is sensitive for $x$ with respect to $\uf$ if $\{i\}$ is a sensitive block of $\uf$ at $x$. If we only restrict attention to size-1 blocks in Definition~\ref{defn: bs}, we call the corresponding measure \emph{$\su$-sensitivity of $f$} and denote it by $\us(f)$.

By a \emph{partial assignment} on $n$ bits, we mean a string $p \in \cbra{0, 1, \su, *}^n$, representing partial knowledge of a string in $\zoneu^n$, where the $*$-entries are yet to be determined. We say a string $y \in \zoneu^n$ is \emph{consistent} with a partial assignment $p \in \cbra{0, 1, \su, *}^n$ if $y_i = p_i$ for all $i \in [n]$ with $p_i \neq *$.

\begin{definition}[Certificate]
    Let $f : \zone^n \to \zone$ and $x \in \zoneu$. A partial assignment $p \in \cbra{0, 1, \su, *}^n$ is called a \emph{certificate} for $\uf$ at $x$ if 
    \begin{itemize}
        \item $x$ is consistent with $p$, and
        \item $\uf(y) = \uf(x)$ for all $y$ consistent with $p$.
    \end{itemize}
    The size of this certificate is $|p| := \abs{\cbra{i \in [n] : p_i \neq *}}$. The \emph{domain of $p$} is said to be $\cbra{i \in [n] : p_i \neq *}$.
\end{definition}
In other words, a certificate for $\uf$ at $x$ is a set of variables of $x$ that if revealed, guarantees the output of all consistent strings with the revealed variables to be equal to $\uf(x)$.

The \emph{certificate complexity of $\uf$ at $x \in \zoneu^n$}, denoted $\sC(\uf,x)$, is the minimum size of a certificate $p$ for $\uf$ at $x$.

\begin{definition}[Certificate complexity]\label{defn: ucert}
    For $b \in \zoneu$, define the $b$-certificate complexity of $f$, denoted $\usCb(f)$, to be the maximum value of $\sC(\uf, x)$ over all $x \in \uf^{-1}(b)$. For a function $f : \zone^n \to \zone$, the \emph{certificate complexity} of $f$, denoted $\usC(f)$, is the maximum value of $\usCzero(\uf)$ and $\usCone(f)$.
\end{definition}

Observe that we did not include inputs $x \in \uf^{-1}(u)$ in the maximization in the definition of $\usC(f)$ in Definition~\ref{defn: ucert}. We justify below that doing so does not affect $\usC(f)$ asymptotically.
\begin{observation}
    Let $f : \zone^n \to \zone$ and let $x \in \uf^{-1}(\su)$. Then,
    \[
        \mathsf{C}_{\su, \su}(f) \leq 2 \usC(f).
    \]
\end{observation}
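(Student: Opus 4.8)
The plan is to take an input $x \in \uf^{-1}(\su)$ and build a certificate for $\uf$ at $x$ of size at most $2\usC(f)$ by combining two certificates of the underlying Boolean function $f$. Since $\uf(x) = \su$, there must exist two resolutions $y^0, y^1 \in \res(x)$ with $f(y^0) = 0$ and $f(y^1) = 1$. The idea is to take a minimal $0$-certificate $p^0$ for $\uf$ at $y^0$ (equivalently, a minimal $0$-certificate for $f$ at $y^0$, since $y^0$ has no $\su$-entries) and a minimal $1$-certificate $p^1$ for $\uf$ at $y^1$; by Definition~\ref{defn: ucert}, $|p^0| \le \usCzero(f) \le \usC(f)$ and $|p^1| \le \usCone(f) \le \usC(f)$.

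The candidate certificate $p$ for $\uf$ at $x$ is obtained by overlaying the domains of $p^0$ and $p^1$, but using the values from $x$ itself: for each coordinate $i$ in $\mathrm{domain}(p^0) \cup \mathrm{domain}(p^1)$, set $p_i = x_i$, and set $p_i = *$ elsewhere. Clearly $x$ is consistent with $p$, and $|p| \le |p^0| + |p^1| \le 2\usC(f)$. It remains to verify the second bullet of the certificate definition: every $z \in \zoneu^n$ consistent with $p$ satisfies $\uf(z) = \su$. For this, take any such $z$. I would argue that $z$ still has a resolution evaluating to $0$ and one evaluating to $1$. Indeed, consider the coordinates in $\mathrm{domain}(p^0)$: on those coordinates $z$ agrees with $x$, and $x$ agrees with $y^0$ wherever $x$ is $\zone$-valued; but a priori $x$ could be $\su$ on some coordinate of $\mathrm{domain}(p^0)$ where $p^0$ is $\zone$-valued. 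Here is where I use that $p^0$ is a certificate for $\uf$ (not just for $f$) at $y^0$: resolve all $\su$-coordinates of $z$ that lie in $\mathrm{domain}(p^0)$ to match $p^0$, and resolve the remaining $\su$-coordinates arbitrarily; the resulting string $z'$ is consistent with $p^0$, hence $\uf(z') = 0$, hence $f(z') = 0$, so $z'$ is a resolution of $z$ evaluating $f$ to $0$. Symmetrically, $z$ has a resolution evaluating $f$ to $1$. Therefore $\uf(z) = \su = \uf(x)$, so $p$ is a valid certificate and $\mathsf{C}(\uf, x) \le |p| \le 2\usC(f)$. Taking the maximum over $x \in \uf^{-1}(\su)$ gives $\mathsf{C}_{\su,\su}(f) \le 2\usC(f)$.

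The main subtlety — and the step I would be most careful about — is the interaction between $\su$-entries of $x$ and the $\zone$-valued entries of the certificates $p^0, p^1$: one must not accidentally claim that $x$ is consistent with $p^0$ or $p^1$ (it need not be). The fix is exactly the one above: define $p$ using $x$'s own values on the combined domain, and when checking consistency of a resolution argument, re-resolve the relevant $\su$-coordinates to line up with $p^0$ (resp.\ $p^1$), which is legitimate precisely because those coordinates are $\su$ in $z$ and we are free to resolve them. This is why it is essential to work with certificates of $\uf$ rather than certificates of $f$ restricted to $\res(x)$. Everything else is bookkeeping about domains and sizes.
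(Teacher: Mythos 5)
Your proof is correct and follows essentially the same route as the paper's: pick a $0$-resolution and a $1$-resolution of $x$, take a minimum certificate for $\uf$ at each, and certify $\su$ by revealing the union of their domains. If anything, yours is the more careful write-up --- the paper sets the combined partial assignment to $\su$ on its whole domain (which need not be consistent with $x$ as the certificate definition requires), whereas you correctly populate it with $x$'s own values and then re-resolve the $\su$-coordinates of an arbitrary consistent string to line up with each of the two certificates.
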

\begin{proof}
    Let $x \in \uf^{-1}(\su)$. By definition, this means there exist strings $y, z \in \zone^n$ such that $y \in \res(x) \cap \uf^{-1}(1)$ and $z \in \res(x) \cap \uf^{-1}(0)$. Let $c_y, c_z$ be a 0-certificate for $\uf$ at $y$ and a 1-certificate for $\uf$ at $z$, respectively. Let their domains be $C_y, C_z$, respectively. Let $p \in \cbra{0, 1, \su, *}^n$ be the partial assignment defined by
    \[
        p_i = \begin{cases}
            \su & i \in C_y \cup C_z,\\
            * & \textnormal{otherwise}.
        \end{cases}
    \]
    Note that there is a setting of the $\su$-variables in $p$ (set the variables in $C_y$ consistently with $c_y$, and leave the other variables as $\su$) that forces $\uf$ to 0, and another setting to the $\su$-variables in $p$ (set the variables in $C_z$ consistently with $c_z$, and leave the other variables as $\su$) that forces $\uf$ to 1. Thus, $p$ certifies that $\uf(x) = \su$, and the number of non-$*$ variables in it is at most $2\usC(f)$.
\end{proof}

We require the following observations.
\begin{observation}\label{obs: unstable certificate no u}
    Let $f : \zone^n \to \zone$, $b \in \zone$ and $x \in \uf^{-1}(b)$. Let $c$ be a minimal certificate of $\uf$ at $x$, and let its domain be $C$. Then for all $i\in C$, we have $c_i \neq \su$.
\end{observation}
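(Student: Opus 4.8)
The plan is to argue by contradiction: suppose $c$ is a minimal certificate for $\uf$ at $x$ with domain $C$, and suppose some $i \in C$ has $c_i = \su$. I will show that deleting the $i$-th coordinate from $c$ (i.e., setting $c_i = *$) still yields a valid certificate, contradicting minimality. Let $c'$ be the partial assignment that agrees with $c$ everywhere except $c'_i = *$. Clearly $x$ is still consistent with $c'$ (since making a constraint weaker only enlarges the consistent set, and $x$ was consistent with $c$). So the only thing to check is that $\uf(y) = b$ for every $y$ consistent with $c'$.

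The key observation is monotonicity of $\uf$ with respect to the "information order" on $\zoneu$ in which $\su$ is below both $0$ and $1$. Concretely: if $y, y' \in \zoneu^n$ agree on all coordinates except coordinate $i$, where $y_i = \su$ and $y'_i \in \zone$, then $\res(y') \subseteq \res(y)$, hence if $\uf(y) = b \in \zone$ then also $\uf(y') = b$. (This is immediate from the definition of $\uf$ via resolutions: $\uf$ takes a $\zone$-value exactly when $f$ is constant on the resolution set, and shrinking that set preserves constancy.) Now take any $y$ consistent with $c'$. Let $y''$ be obtained from $y$ by resetting coordinate $i$ to $\su$. Then $y''$ is consistent with $c$ (it matches $c$ on all of $C$: on $i$ because $c_i = \su = y''_i$, and elsewhere because $y$ already matched $c'$, which agrees with $c$ off $i$). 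Since $c$ is a certificate, $\uf(y'') = \uf(x) = b$. If $y_i = \su$ then $y = y''$ and we are done; if $y_i \in \zone$ then by the monotonicity observation applied to the pair $(y'', y)$ we get $\uf(y) = b$ as well. Hence $c'$ is a certificate for $\uf$ at $x$ of strictly smaller size, contradicting minimality of $c$. Therefore no $i \in C$ has $c_i = \su$.

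I expect the only subtlety — and it is minor — to be making sure the "resetting coordinate $i$ to $\su$" step is set up so that $y''$ is genuinely consistent with the \emph{original} certificate $c$; this is why I introduce $c'$ first and run the comparison through $y''$. The core content, the monotonicity of the hazard-free extension under refining $\su$ to a bit, is a one-line consequence of the $\res(\cdot)$ definition and requires no computation.
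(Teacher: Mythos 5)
Your proposal is correct and follows the same route as the paper's proof: remove the $\su$-valued coordinate from the certificate and show the result is still a certificate, contradicting minimality. The paper compresses the verification into ``by the definition of a certificate, and $\uf$''; your monotonicity-under-refinement argument via $\res(y') \subseteq \res(y)$ is exactly the detail being invoked there.
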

\begin{proof}
Assume towards a contradiction an index $i \in [n]$ in a minimal certificate $c$ for $\uf$ at $x$ with $c_i = \su$. By the definition of a certificate, and $\uf$, the partial assignment $c'$ obtained by removing $i$ from the domain of $c$ is such that all strings $x \in \zoneu^n$ consistent with $c'$ satisfy $\uf(x) = b$. This contradicts the minimality of $c$.
\end{proof}
\begin{observation}\label{obs: unstable zero one certificates intersect}
    Let $f : \zone^n \to \zone, x \in \uf^{-1}(0)$ and $y \in \uf^{-1}(1)$. Let $c^x, c^y$ be minimum certificates for $\uf$ at $x$ and $y$, respectively, and let $C^x$ and $C^y$ be their domains, respectively. Then there exists an index $i \in C^x \cap C^y$ such that one of the following holds:
    \begin{itemize}
        \item $c^x_i = 0$ and $c^y_i = 1$, or
        \item $c^x_i = 1$ and $c^y_i = 0$.
    \end{itemize} 
\end{observation}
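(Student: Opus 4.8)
The plan is to argue by contradiction, exploiting the fact that a minimum certificate is in particular a minimal one, so Observation~\ref{obs: unstable certificate no u} applies to both $c^x$ and $c^y$. Concretely, suppose the conclusion fails: for every index $i \in C^x \cap C^y$, the pair $(c^x_i, c^y_i)$ is neither $(0,1)$ nor $(1,0)$. I want to turn this into a single string $z \in \zoneu^n$ consistent with both certificates, which immediately yields $\uf(z) = \uf(x) = 0$ and $\uf(z) = \uf(y) = 1$, a contradiction.

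The first step is to clean up the certificates. Since $c^x$ is a minimum, hence minimal, certificate for $\uf$ at $x$ and $x \in \uf^{-1}(0)$ with $0 \in \zone$, Observation~\ref{obs: unstable certificate no u} gives $c^x_i \in \zone$ for all $i \in C^x$; symmetrically $c^y_i \in \zone$ for all $i \in C^y$. Together with the contradiction hypothesis (no $0$/$1$ disagreement on the overlap), and the fact that neither certificate takes the value $\su$ on its domain, this forces $c^x_i = c^y_i$ for every $i \in C^x \cap C^y$.

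The second step builds the common consistent input. Define $z \in \zoneu^n$ by $z_i = c^x_i$ for $i \in C^x$, $z_i = c^y_i$ for $i \in C^y \setminus C^x$, and $z_i = \su$ for $i \notin C^x \cup C^y$. By the previous step this is well defined, and $z$ agrees with $c^x$ on all of $C^x$ and with $c^y$ on all of $C^y$, i.e. $z$ is consistent with both $c^x$ and $c^y$. By the defining property of a certificate, consistency with $c^x$ gives $\uf(z) = \uf(x) = 0$, and consistency with $c^y$ gives $\uf(z) = \uf(y) = 1$; since $0 \neq 1$ this is the desired contradiction. Hence some $i \in C^x \cap C^y$ has $c^x_i \neq c^y_i$, and as both values lie in $\zone$ (by the first step), one equals $0$ and the other $1$, which is exactly one of the two listed cases.

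The only point requiring any care — and it is a minor one — is ensuring that $z$ is well defined, that is, that $c^x$ and $c^y$ do not conflict on $C^x \cap C^y$. This is precisely what the cleanup step delivers: the contradiction hypothesis alone only rules out a $0$ versus $1$ clash, and one needs Observation~\ref{obs: unstable certificate no u} to additionally rule out a $0$ versus $\su$ or $1$ versus $\su$ clash. Beyond this, the argument is immediate from the definition of a certificate, so I do not expect any genuine obstacle.
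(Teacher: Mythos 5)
Your proof is correct and follows essentially the same approach as the paper: argue by contradiction, invoke Observation~\ref{obs: unstable certificate no u} to rule out $\su$-entries in the certificates, deduce that $c^x$ and $c^y$ must then agree on $C^x \cap C^y$, and exhibit a single string consistent with both certificates to get $\uf(z)=0$ and $\uf(z)=1$ simultaneously. Your write-up is in fact slightly more careful than the paper's, which asserts that $x$ itself is ``a valid completion of $c^y$'' (not literally true, since $x$ may disagree with $c^y$ on $C^y \setminus C^x$); your explicit construction of $z$ patches exactly that point.
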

\begin{proof}
    First recall from Observation~\ref{obs: unstable certificate no u} that neither $c^x$ nor $c^y$ take the value $\su$ on their domains. Next, towards a contradiction, assume there is no index $i \in C^x \cap C^y$ satisfying the requirements in the statement of the observation. This means $x \in \uf^{-1}(0)$ is a valid completion of $c^y$ and $y \in \uf^{-1}(1)$ is a valid completion of $c^x$. This contradicts the assumption that $c^x$ and $c^y$ were certificates.
\end{proof}

\section{Relating query complexity and \uquery complexity}\label{sec: relating query and uquery}
In this section, we examine the relationship between the well-studied notion of query complexity and \uquery complexity. We begin by showing that the Indexing function exhibits an exponential separation between usual query complexity and \uquery complexity.

The following is well known.
\begin{fact}
    Let $n$ be a positive integer. Then,
    \[
        \sD(\IND_n) = n+1, \qquad \sR(\IND_n) = \Omega(n), \qquad \sQ(\IND_n) = \Omega(n).
    \]
\end{fact}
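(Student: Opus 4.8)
The plan is to establish the deterministic bound by an explicit algorithm together with a short adversary argument, and to obtain the randomized and quantum lower bounds by reducing from parity.

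\textbf{Deterministic bound.} For $\sD(\IND_n) \le n+1$, the algorithm queries all $n$ addressing variables, computes $k = \bin(x)$, queries the single target variable $y_k$, and outputs its value. For $\sD(\IND_n) \ge n+1$ I would argue against an arbitrary deterministic decision tree of depth at most $n$ via an adversary that answers every query to an addressing variable by $0$. On any root-to-leaf path suppose $j$ addressing variables and $k$ target variables are queried, with $j+k \le n$. The addresses consistent with the answered addressing bits form a set $A$ with $|A| = 2^{n-j} \ge 2^{k} > k$, so some address $a^\star \in A$ has $y_{a^\star}$ unqueried along this path. The adversary can then complete the input so that $\bin(x) = a^\star$ (consistent with the committed addressing bits) and set $y_{a^\star}$ to either $0$ or $1$; both completions are consistent with the path but $\IND_n$ takes different values on them, a contradiction. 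Hence $\sD(\IND_n) = n+1$. (Equivalently, every $1$-certificate of $\IND_n$ must fix all $n$ addressing bits together with the addressed target bit, so $C_1(\IND_n) = n+1 \le \sD(\IND_n)$.)

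\textbf{Randomized and quantum bounds.} Here I would reduce from the parity function $\XOR_n$ on $n$ bits, using the well-known bounds $\sR(\XOR_n) = \Omega(n)$ and $\sQ(\XOR_n) = \Omega(n)$. Define a fixed target string $y^\star \in \zone^{2^n}$ by $y^\star_{\bin(z)} = \XOR(z)$ for every $z \in \zone^n$ (this is well defined since $\bin$ is a bijection); then $\IND_n(x, y^\star) = y^\star_{\bin(x)} = \XOR(x)$ for all $x \in \zone^n$. Given any query algorithm computing $\IND_n$, restrict attention to inputs of the form $(x, y^\star)$: each query to a target variable returns the fixed known bit $y^\star_k$ and can be simulated without a real query, while queries to addressing variables become queries to the bits of $x$. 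In the randomized model this directly yields a randomized algorithm for $\XOR_n$ with no more queries; in the quantum model one observes that once the target variables are fixed to $y^\star$, the portion of the oracle $O_{(x, y^\star)}$ acting on the target-index branches is an input-independent unitary which can be folded into the algorithm's input-independent unitaries, leaving exactly a (padded) $\XOR_n$ oracle on the addressing-index branches. Either way we get $\sR(\IND_n) \ge \sR(\XOR_n) = \Omega(n)$ and $\sQ(\IND_n) \ge \sQ(\XOR_n) = \Omega(n)$.

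\textbf{Main obstacle.} None of these steps is deep; the only point that needs a little care is the quantum reduction, where one must check that restricting to the subfamily $\{(x, y^\star)\}$ really does turn $O_{(x,y^\star)}$ into an input-independent unitary composed with a copy of the $\XOR_n$ query oracle, so that the number of quantum queries is genuinely preserved.
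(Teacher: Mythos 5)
The paper states this as a \textbf{Fact} and gives no proof, labeling it ``well known,'' so there is no internal argument to compare against; your job here is simply to supply a correct proof, and you do. Your deterministic argument is sound: the upper bound is the obvious address-then-target protocol, and for the lower bound the adversary answering addressing queries with $0$ works because on any depth-$\le n$ path with $j$ addressing and $k$ target queries one has $2^{n-j}\ge 2^{k}>k$, so some consistent address has an untouched target bit, giving two completions of the path with different function values. The randomized and quantum bounds via a reduction from $\XOR_n$ on a hardwired target string $y^\star$ are also correct, and your worry about the quantum step is handled exactly as you suspect: with $y^\star$ fixed, the oracle $O_{(x,y^\star)}$ is a direct sum (over the index register) of the $\XOR_n$-input oracle on addressing indices and a fixed, input-independent unitary on target indices, so one call to $O_{(x,y^\star)}$ can be simulated by one call to the $x$-oracle plus input-independent unitaries.

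Two small remarks. Your parenthetical that \emph{every} $1$-certificate must fix all $n$ addressing bits is not literally true---a $1$-certificate may fix $n-1$ addressing bits and both of the two remaining addressed target bits, again of size $n+1$---but the conclusion $\sC_1(\IND_n)=n+1$ is correct (fixing $j$ addressing bits forces fixing $2^{n-j}$ target bits, and $j+2^{n-j}$ is minimized at $n+1$), so the bound $\sD\ge\sC_1=n+1$ stands. Also, it is worth noting that the paper's proof of the $\su$-analog (Lemma~\ref{lem: unstable indexing lower bounds}) instead reduces from $\OR_{2^n}$ by fixing the addressing bits to $\su$; in the standard Boolean setting that route is unavailable (you cannot set addressing bits to $\su$), and your parity reduction is the natural classical counterpart.
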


Define the OR function as
\[
    \OR_n(x) = \begin{cases}
    0 & x = 0^n\\
    1 & \textnormal{otherwise}.
    \end{cases}
\]
\begin{lemma}\label{lem: pror lower bounds}
    Let $n$ be a positive integer. Then,
    \[
        \sD(\OR_n) = n, \qquad \sR(\OR_n) = \Theta(n), \qquad \sQ(\OR_n) = \Theta(\sqrt{n}).
    \]
\end{lemma}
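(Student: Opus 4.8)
The three statements are all classical, so I would prove each with a textbook technique and keep the exposition brief.

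\textbf{Deterministic complexity.} The bound $\sD(\OR_n) \le n$ is immediate (query every bit). For the matching lower bound I would run the standard adversary argument: answer every query with $0$. If a deterministic algorithm halts after querying only a proper subset $S \subsetneq [n]$ of the coordinates, then both $0^n$ and $e_j$ (for any $j \in [n]\setminus S$) are consistent with every answer returned, yet $\OR_n(0^n) = 0 \neq 1 = \OR_n(e_j)$. Hence the algorithm cannot have determined the output, so it must query all $n$ coordinates, giving $\sD(\OR_n) = n$.

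\textbf{Randomized complexity.} The upper bound $\sR(\OR_n) = O(n)$ follows from $\sR(\OR_n) \le \sD(\OR_n)$. For the lower bound I would invoke Yao's minimax principle (Lemma~\ref{lem: yao}) with the distribution $\mu$ placing weight $1/2$ on $0^n$ and weight $1/(2n)$ on each $e_i$. Given any deterministic tree $T$ of depth $d$, follow the path from the root that always takes the edge labelled $0$; it reaches a leaf $\ell$ after querying a set $S$ with $|S|\le d$. If $\ell$ is labelled $1$, then $T$ errs on $0^n$, an event of $\mu$-probability $1/2 > 1/3$. If $\ell$ is labelled $0$, then for every $i \notin S$ the input $e_i$ drives $T$ down the same all-zeros path to $\ell$ (all coordinates in $S$ are $0$ in $e_i$), so $T$ errs on each such $e_i$, for total error at least $(n-d)/(2n)$; this is $\le 1/3$ only if $d \ge n/3$. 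Thus $\sD_\mu(\OR_n) = \Omega(n)$, and Yao's principle gives $\sR(\OR_n) = \Omega(n)$, so $\sR(\OR_n) = \Theta(n)$.

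\textbf{Quantum complexity.} The upper bound $\sQ(\OR_n) = O(\sqrt n)$ is Grover's search algorithm, which I would cite as a black box. For the matching lower bound I would apply Ambainis' adversary method (Theorem~\ref{thm: ambainis}) with $X = \{0^n\}$, $Y = \{e_i : i \in [n]\}$ (note $\OR_n$ separates $X$ from $Y$) and relation $R = X \times Y$: every $x \in X$ is related to all $m = n$ elements of $Y$; every $y \in Y$ is related to $m' = 1$ element of $X$; for $x = 0^n$ and any coordinate $i$ the unique $y \in Y$ with $y_i \neq x_i$ is $e_i$, so $\ell = 1$; and $|X| = 1$ forces $\ell' = 1$. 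The adversary bound then yields $\sQ(\OR_n) = \Omega(\sqrt{mm'/(\ell\ell')}) = \Omega(\sqrt n)$, hence $\sQ(\OR_n) = \Theta(\sqrt n)$.

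\textbf{Main obstacle.} There is no serious difficulty here; the only points needing a little care are the precise choice of the hard distribution $\mu$ so the all-zeros-path argument closes cleanly, and invoking Grover's algorithm as an off-the-shelf upper bound for the quantum case.
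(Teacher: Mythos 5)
Your proofs are all correct. Note, however, that the paper does not actually prove Lemma~\ref{lem: pror lower bounds}: it simply records the deterministic and randomized bounds as folklore, cites Grover's algorithm for the quantum upper bound, and cites \cite{BBBV97} for the quantum lower bound. So you have supplied a self-contained argument where the paper offers only references.

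Your deterministic adversary argument and the Yao's-minimax calculation for the randomized bound are standard and entirely correct; the arithmetic $(n-d)/(2n) \le 1/3 \Rightarrow d \ge n/3$ closes cleanly with the distribution you chose. For the quantum lower bound you invoke Ambainis' adversary method (Theorem~\ref{thm: ambainis}) with $X=\{0^n\}$, $Y=\{e_i\}$, whereas the citation the paper gives, \cite{BBBV97}, establishes the same $\Omega(\sqrt n)$ bound by the hybrid argument. Both routes are valid; yours has the minor advantage of reusing a theorem the paper has already stated, and indeed the paper itself applies Theorem~\ref{thm: ambainis} in exactly this one-versus-many configuration in the proof of Lemma~\ref{lem: unstable randomized quantum bs}, so your choice is well aligned with the rest of the paper.
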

The deterministic and randomized bounds are folklore, the quantum upper bound follows by Grover's search algorithm~\cite{Gro96}, and the quantum lower bound was first shown in~\cite{BBBV97}.

We now prove Lemma~\ref{lem: unstable indexing lower bounds} using a reduction from the OR function. 
\begin{proof}[Proof of Lemma~\ref{lem: unstable indexing lower bounds}]
    Let $f = \IND_n$. We first show the upper bounds. The deterministic and randomized upper bounds are trivial since the number of input bits is $\Theta(2^n)$. For the quantum upper bound, we first query the $n$ addressing variables, let the output be $x \in \zoneu^n$. Further, let $z \in \zoneu^{2^n}$ denote the input's target variables. Consider the set of indices $I = \cbra{\bin(y) \in [2^n] : y \in \res(x)}$. By the definition of $\IND_n$, we have $\uf(x, z) = b \in \zone$ iff $z_i = b$ for all $i \in I$. By a standard use of Grover's search algorithm a constant number of times, we can determine with high probability whether $z$ is the constant 0 on $I$, or whether it is the constant 1 on $I$, or whether neither of these is the case. This immediately gives us the output: 0 in the first case, 1 in the second case, and $\su$ in the last case. The cost of querying the addressing variables is $n$, and constantly many applications of Grover's search algorithm costs $O(2^{n/2})$.
    
    We now show the lower bounds. Given a (deterministic/randomized/quantum) algorithm $\A$ solving $\uf$, we construct a (deterministic/randomized/quantum) algorithm $\A'$ with query complexity at most that of $\A$, and that solves $\OR_{2^n}$. Let the input to $\OR_{2^n}$ be denoted by $x$.

    We define $\A'$ to be the algorithm that simulates a run of $\A$ in the following way:
    
    \begin{enumerate}
        \setlength\itemsep{-.3em}
        \item If $\A$ queries the $j$'th bit of its input where $j \in [n]$, return $\su$.
        \item If $\A$ queries the $j$'th bit of its input, where $j = n + k$ for some $k > 0$, then query $x_k$ and return that value.
    \end{enumerate}
    If $\A$ outputs $0$, $\A'$ outputs $0$. Else $\A'$ outputs $1$. This can be simulated in the deterministic, randomized, and quantum models.
    Note that the query complexity of $\A'$ is at most that of $\A$. We now argue correctness.
    
    \begin{itemize}
        \item Let $x = 0^{2^n}$. The input $z \in \zoneu^{n + 2^n}$ to $\uf$ with all addressing bits set to $\su$, and the target bits all equal to $0$ is consistent with the run of $\A$ above. Note that for all $y \in \res(z)$ we have $f(y) = 0$ since the relevant target bit is always 0 regardless of the choice of addressing bits. 
        Since (with high probability) $\A$ must be correct on $z$, this means that the output of $\A$, and hence $\A'$, is 0 (with high probability) and correct in this case.
        \item Let $x \neq 0^{2^n}$. The input $z \in \zoneu^{n + 2^n}$ to $\uf$ with all addressing bits set to $\su$, and the target bits equal to the string $x$ is consistent with the run of $\A$ above. Now there exists a $y \in \res(z)$ that sets all indexing variables to point to a non-0 bit of the target variables (since there exists such a target variable). By the correctness of $\A$ (with high probability), this means the output of $\A$ must either be $\su$ or 1. Thus $\A'$ outputs 1 in this case (with high probability), which is the correct answer.
    \end{itemize}
\end{proof}

Now that we have exhibited an explicit function in Lemma \ref{lem: unstable indexing lower bounds} that witnesses an exponential gap between each of deterministic, randomized, and quantum query complexity and their $\su$ counterparts, we turn to the question of what functions are as hard in the standard query model as they are in the \uquery model. We now prove Lemma~\ref{lem: monotone unstable}, which says that that the class of monotone functions is one such class of functions.

\begin{proof}[Proof of Lemma~\ref{lem: monotone unstable}]
    Any query algorithm for $\uf$ also computes $f$ with at most the same cost, and hence $\sD(f) \le \sDu(f)$. Similarly we have $\sR(f) \leq \sRu(f)$ and $\sQ(f) \leq \sQu(f)$.

    We start with a best (deterministic/randomized/quantum) query algorithm $\A$ for $f$, and an oracle holding an input $x \in \zoneu^n$ for $\uf$ .
    Now, for $b \in \zone$, we define $\A_b$ to be the same algorithm as $\mathcal{A}$, but whenever $\mathcal{A}$ queries the $j$'th bit of its input, it performs the following operation instead:
    
    \begin{enumerate}
        \setlength\itemsep{-.3em}
        \item Query the $j$'th bit of $x$, denote the outcome by $x_j$.
        \item If $x_j \in \{0,1\}$, return $x_j$.
        \item If $x_j = \su$, return $b$.
    \end{enumerate}

    Note that this operation can indeed be implemented quantumly making $2$ queries to $O_x$. The initial query performs the instructions described above, and the second query uncomputes the values from the tuple we don't need for the remaining part of the computation. Note here that $O_x^3 = I$ by definition, and thus $O_x^2 = O_x^{-1}$, which is what we need to implement for the uncomputation operations.
    
    Let $S_{\su} = \{i\mid x_i=\su \}$ be the positions that have $\su$ in $x$. Recall that $y^0 := x|_{S_u\leftarrow \vec{0}}$ (and $y^1 := x|_{S_u\leftarrow \vec{1}}$) is the input that has all the $\su$ in $x$ replaced by $0$s (by $1$s respectively). Run $\A_0$ and $\A_1$ (possibly repeated constantly many times each) to determine the values of $f(y^0)$ and $f(y^1)$ with high probability. If $f(y^0)=1$, then we output $1$. Else if $f(y^1)=0$, we output $0$. Else we have $f(y^0)=0$ and $f(y^1)=1$, and we output $\su$.
    
    \textbf{Correctness:} First observe that for $b \in \zone$, all answers to queries in the algorithm $\A_b$ are consistent with the input $y^b$. Next observe that in the poset (equivalently `subcube') formed by the resolutions of $x$, the inputs $y^0$ and $y^1$ form the bottom and top elements respectively. Since $f$ is monotone, if $f(y^0)=1$, we can conclude that $f$ is $1$ on all resolutions of $x$. Similarly when $f(y^1)=0$, it must be the case that $f$ is $0$ on all inputs in the poset. The remaining case is when $f(y^0)=0$ and $f(y^1)=1$. In this case, the inputs $y^0$ and $y^1$ are themselves resolutions of $x$ with different evaluations of $f$, and hence the algorithm correctly outputs $\su$. 

    By standard boosting of success probability using a Chernoff bound by repeating $\A_0$ ($\A_1$) constantly many times and taking a majority vote of the answers, we can ensure that the correctness probability of $\A_0$ ($\A_1$) is at least 0.9. Thus, the algorithm described above has correctness probability at least $0.9^2 = 0.81$, and its cost is at most a constant times the cost of $\A$.
\end{proof}

Specifically, Lemma~\ref{lem: monotone unstable} allows us to exhibit an explicit $n$-variate Boolean function $f$ with even \uquery complexity $O(\log n)$. The function of interest here is the monotone variant of the Indexing function due to Wegener~\cite{Wegener85}, defined in Definition~\ref{defn: monotone indexing}.

\begin{lemma}\label{lem: monotone indexing upper bound}
    Let $n$ be a positive integer and let $m = n + \binom{n}{n/2}$. Let $f = \mIND : \zone^{m} \to \zone$. Then,
    \[
        \sDu(f), \sRu(f), \sQu(f) = O(\log m).
    \]
\end{lemma}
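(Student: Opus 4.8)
The plan is to invoke Lemma~\ref{lem: monotone unstable} together with the known query complexity of the monotone indexing function. First I would observe that $\mIND_n$ is a monotone Boolean function: increasing some input bit from $0$ to $1$ can only move $|x|$ upward (possibly pushing us from the ``$|x| < n/2$'' regime into the ``$|x| = n/2$'' or ``$|x| > n/2$'' regime), and within the ``$|x| = n/2$'' regime the output is $y_x$, which is also monotone in the $y$-variables; one checks that no transition can take the output from $1$ to $0$. Hence $\mIND_n$ satisfies the hypothesis of Lemma~\ref{lem: monotone unstable}, so $\sDu(\mIND_n) = \Theta(\sD(\mIND_n))$, and likewise for the randomized and quantum measures.

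Next I would bound the ordinary query complexity $\sD(\mIND_n)$ by $O(n)$, i.e.\ $O(\log m)$ since $m = n + \binom{n}{n/2} = \Theta(2^n/\sqrt n)$ and thus $\log m = \Theta(n)$. The algorithm is immediate: query all $n$ addressing variables $x$ (cost $n$); compute $|x|$; if $|x| < n/2$ output $0$, if $|x| > n/2$ output $1$, and if $|x| = n/2$ then $x$ itself is a weight-$n/2$ string indexing a single target variable $y_x$, which we query once more (cost $1$) and output. Total cost $n+1 = O(\log m)$. This gives $\sD(\mIND_n) = O(\log m)$, and since $\sQ \le \sR \le \sD$ always, all three standard measures are $O(\log m)$.

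Combining the two steps: $\sDu(\mIND_n) = \Theta(\sD(\mIND_n)) = O(\log m)$, and similarly $\sRu(\mIND_n) = \Theta(\sR(\mIND_n)) = O(\log m)$ and $\sQu(\mIND_n) = \Theta(\sQ(\mIND_n)) = O(\log m)$, which is the claim. I do not expect any real obstacle here; the only point requiring a moment's care is the monotonicity verification, specifically checking the ``boundary'' transitions where flipping an addressing bit changes which case of the definition of $\mIND_n$ applies — one must confirm that such a flip never decreases the output value. Once monotonicity is in hand, the lemma is an immediate corollary of Lemma~\ref{lem: monotone unstable} and the trivial $O(n)$-query algorithm for $\mIND_n$.
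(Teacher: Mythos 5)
Your proposal matches the paper's proof essentially exactly: establish that $\mIND_n$ is monotone, give the obvious $(n+1)$-query deterministic algorithm (query the $n$ addressing bits, then at most one target bit), conclude $\sD(\mIND_n) = O(\log m)$, and invoke Lemma~\ref{lem: monotone unstable}. The only difference is that you spell out the monotonicity verification a bit more carefully than the paper does, which is harmless.
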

\begin{proof}
    A deterministic (and hence randomized and quantum) query upper bound for $f$ is as follows: first query the $n$ indexing variables. If the number of 1's seen so far is smaller than $n/2$, output 0, else if the number of 1's seen so far is larger than $n/2$, output 1, else query the relevant target variable and output the value seen. This gives a deterministic query upper bound of $n = O(\log m)$. Since $f$ is monotone, using Lemma~\ref{lem: monotone unstable} completes the proof.
\end{proof}
This shows a stark contrast between the (deterministic/randomized/quantum) $\su$-query complexity of the monotone and non-monotone variants of the Indexing function, with the former having logarithmic (deterministic/randomized/quantum) $\su$-query complexity and the latter having linear (square root in the case of quantum) query complexity (Lemma~\ref{lem: unstable indexing lower bounds}).

\begin{observation}
    The proof of Lemma \ref{lem: monotone unstable} goes through for unate functions (Definition \ref{defn: unate}) too. Let $f$ be unate with orientation $s\in \zone^n$. Then, we modify the proof of Lemma \ref{lem: monotone unstable} to  use $s$ instead of $\vec{0}$ and $\overline{s}=\vec{1}\oplus s$ instead of $\vec{1}$. i.e., we want to compute $f$ on $y^0 = x|_{S_u\leftarrow s}$ and $y^1 = x|_{S_u\leftarrow \overline{s}}$. So if $x_j$ was queried and the reply was $\su$, we use $s_j$ ($\overline{s}_j$) when computing $f(y^0)$ ($f(y^1)$ respectively). It is not hard to see that the proof goes through exactly as before.
\end{observation}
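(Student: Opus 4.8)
The plan is to reduce to the monotone case by pushing the orientation $s$ through the resolution structure. Write $f(x) = g(x \oplus s)$ for a monotone $g : \zone^n \to \zone$ and a string $s \in \zone^n$. First I would record the elementary fact that XOR with a $\zone$-string commutes with taking resolutions: since $\su \oplus b = \su$ for $b \in \zone$, we have $\res(x \oplus s) = \cbra{y \oplus s : y \in \res(x)}$ for every $x \in \zoneu^n$. Consequently $f$ restricted to $\res(x)$ agrees with $g$ restricted to $\res(x \oplus s)$, and the all-$\vec 0$ and all-$\vec 1$ completions of $x \oplus s$ on its $\su$-coordinates correspond, under XOR with $s$, exactly to $y^0 := x|_{S_\su \leftarrow s}$ and $y^1 := x|_{S_\su \leftarrow \overline s}$, where $S_\su = \cbra{i \in [n] : x_i = \su}$. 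Thus $y^0$ and $y^1$ are the $g$-minimal and $g$-maximal resolutions of $x$, playing the role that $\vec 0$- and $\vec 1$-completions played in the monotone proof.

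With this in hand I would rerun the argument of Lemma~\ref{lem: monotone unstable} essentially verbatim, with $\vec 0$ replaced by $s$ and $\vec 1$ by $\overline s$. Given a best (deterministic/randomized/quantum) query algorithm $\A$ for $f$, define $\A_s$ (resp.\ $\A_{\overline s}$) to simulate $\A$ but answer a query to coordinate $j$ by querying $x_j$, returning $x_j$ if $x_j \in \zone$ and returning $s_j$ (resp.\ $\overline s_j$) if $x_j = \su$. Exactly as before, every answer produced by $\A_s$ is consistent with the fixed $\zone$-input $y^0$ (on $S_\su$ this is because $y^0_j = s_j$ by construction), so $\A_s$ computes $f(y^0)$; symmetrically $\A_{\overline s}$ computes $f(y^1)$. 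The quantum implementation still costs two calls to $O_x$ per query of $\A$, using $O_x^2 = O_x^{-1}$ for uncomputation, and randomized error can be pushed below any constant by Chernoff-boosting $\A$ before the simulation, so the cost of the resulting $\uf$-algorithm is a constant times that of $\A$.

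Correctness then follows from monotonicity of $g$ in place of monotonicity of $f$: if $f(y^0) = 1$ then $g$ is identically $1$ on $\res(x \oplus s)$, hence $f$ is identically $1$ on $\res(x)$ and $\uf(x) = 1$; if $f(y^1) = 0$ then symmetrically $\uf(x) = 0$; and otherwise $f(y^0) = 0$ and $f(y^1) = 1$ with $y^0, y^1 \in \res(x)$, so $\uf(x) = \su$. Outputting $1$, $0$, $\su$ in these three cases is therefore correct, and combined with the trivial inequalities $\sD(f) \le \sDu(f)$, $\sR(f) \le \sRu(f)$, $\sQ(f) \le \sQu(f)$ (any algorithm for $\uf$ also computes $f$) this yields the claimed equalities. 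I do not anticipate a real obstacle; the only step that wants care is the bookkeeping in the first paragraph — checking that $\res(\cdot)$ commutes with XOR and that $y^0, y^1$ are genuinely the extreme resolutions in the order witnessed by $g$'s monotonicity — since everything downstream is an unchanged copy of the proof of Lemma~\ref{lem: monotone unstable}.
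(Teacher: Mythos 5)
Your proposal is correct and takes essentially the same route as the paper, which itself only sketches the argument by saying to substitute $s$ and $\overline{s}$ for $\vec{0}$ and $\vec{1}$; your additional bookkeeping (that $\oplus s$ commutes with $\res(\cdot)$ and that $y^0, y^1$ are the extremal resolutions in the order induced by the monotone $g$) is exactly the verification the paper leaves implicit.
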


For non-monotone Boolean functions, we can relate their \uquery complexity to the query complexity of their downward closure defined as follows:
\begin{definition}
Let $f:\{0,1\}^n\to \{0,1\}$. We define the downward closure $f^\nabla:\{0,1\}^n\to \{0,1\}$ of $f$ to be the following monotone Boolean function:
    \[
        f^\nabla(x) = \bigvee_{z\le x} f(z)
    \]
\end{definition}
Observe that $f^\nabla$ is monotone for any $f$, and coincides with $f$ when $f$ is monotone.
\begin{lemma}\label{lem: downward closure unstable}
    For all $f : \zone^n \to \zone$,
    \[
    \sD(f^\nabla) \leq \sDu(f), \qquad \sR(f^\nabla) \leq \sRu(f), \qquad \sQ(f^\nabla) \leq \sQu(f).
    \]
\end{lemma}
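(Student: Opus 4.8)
The plan is to show that any $\su$-query algorithm for $\uf$ can be converted into an ordinary query algorithm for $f^\nabla$ with no increase in cost, in each of the deterministic, randomized, and quantum models. The key observation is a dictionary between inputs of $f^\nabla$ and a special family of $\su$-inputs of $\uf$. Given $x \in \zone^n$, associate to it the string $\widetilde{x} \in \zoneu^n$ obtained by replacing every $1$-coordinate of $x$ by $\su$ and keeping every $0$-coordinate as $0$; formally $\widetilde{x}_i = \su$ if $x_i = 1$ and $\widetilde{x}_i = 0$ if $x_i = 0$. Then $\res(\widetilde{x}) = \cbra{z \in \zone^n : z \le x}$, so by the definition of $\uf$ we get $\uf(\widetilde{x}) = 1$ iff $f(z) = 1$ for all $z \le x$, which is impossible unless $x = 0^n$ and $f \equiv \dots$; more usefully, $\uf(\widetilde{x}) = 0$ iff $f(z) = 0$ for all $z \le x$, i.e.\ iff $f^\nabla(x) = 0$. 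Consequently $\uf(\widetilde{x}) \in \cbra{1, \su}$ exactly when $f^\nabla(x) = 1$. So if we run a $\su$-query algorithm $\A$ for $\uf$ on $\widetilde{x}$ and output $0$ when $\A$ outputs $0$ and output $1$ otherwise (i.e.\ when $\A$ outputs $1$ or $\su$), we correctly compute $f^\nabla(x)$.

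The remaining point is to implement this simulation using only queries to $x$ itself. When $\A$ queries coordinate $i$, the simulator queries $x_i$; if $x_i = 0$ it returns $0$, and if $x_i = 1$ it returns $\su$. This is a purely local, input-independent post-processing of each query answer, so it costs exactly one query to $x$ per query of $\A$, and it is realizable in all three models. In the deterministic case this is immediate. In the randomized case we apply the simulation to each deterministic tree in the support of $\A$, preserving the error bound since $\uf(\widetilde{x})$ is output with probability at least $2/3$ and our post-processing maps that event to outputting $f^\nabla(x)$. In the quantum case, exactly as in the proof of Lemma~\ref{lem: monotone unstable}, the transformation $x_i \mapsto (x_i = 1\,?\,\su : 0)$ can be applied coherently using two calls to the standard $\zone$-valued query oracle for $x$ (one to write the modified value, one to uncompute), using $O_x^2 = O_x^{-1}$; this multiplies the query count by a constant, which is absorbed in the $O(\cdot)$ — although in fact a single-query reversible implementation also works here since the map on each basis coordinate is a bijection on $\cbra{0,1,\su}$ fixing $0$, so one can even avoid the constant blowup, but a constant factor suffices for the stated bound.

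The main thing to get right is the correctness of the dictionary, namely the equivalence $\uf(\widetilde{x}) = 0 \iff f^\nabla(x) = 0$, which follows directly from $\res(\widetilde{x}) = \cbra{z : z \le x}$ and the definitions of $\uf$ and $f^\nabla$; there is no real obstacle beyond bookkeeping. I would present the argument by first establishing the dictionary as a short claim, then describing the simulator, then checking each of the three models in one sentence apiece, noting that when $f$ is monotone we recover $f^\nabla = f$ and hence the $\sD(f) \le \sDu(f)$ direction of Lemma~\ref{lem: monotone unstable} as a special case.
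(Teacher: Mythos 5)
Your proposal is correct and follows essentially the same route as the paper's proof: replace every $1$-coordinate of $x$ by $\su$, observe that the resolutions of the resulting string are exactly $\cbra{z : z \le x}$, simulate the $\su$-query algorithm with this local substitution on query answers, and output $1$ whenever it returns $1$ or $\su$. The only cosmetic issue is the unfinished aside about when $\uf(\widetilde{x})=1$, which you can simply drop since the argument only needs the characterization of $\uf(\widetilde{x})=0$.
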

\begin{proof}
    Let $\A$ be a query algorithm for $\uf$. We can use $\A$ to compute $f^\nabla$ on an input $x\in \{0,1\}^n$ as follows. 
We define $\A_\su$ to be the same algorithm as $\mathcal{A}$, but whenever $\mathcal{A}$ queries the $j$'th bit of its input, it performs the following operation instead:
    
    \begin{enumerate}
        \setlength\itemsep{-.3em}
        \item Query the $j$'th bit of $x$, denote the outcome by $x_j$.
        \item If $x_j = 0$, return $x_j$.
        \item If $x_j = 1$, return $\su$.
    \end{enumerate}
If $\A_\su$ outputs $0$, we output $0$. Else we output $1$. As in the proof of Lemma~\ref{lem: monotone unstable}, this can be simulated in the deterministic, randomized, and quantum models.

    \textbf{Correctness:} If $\A_\su$ outputs $0$ on $y$, then it must be that every resolution of $y$ evaluates to $0$. Observe that, by definition of $y$, the set $\res(y)$ is precisely the set of strings $z$ such that $z\le x$. This means $f^\nabla(x) = \bigvee_{z\le x} f(z) = 0$. On the other hand, if $\A_\su$ outputs $1$ or $\su$, then in both cases there must exist an input $z\in \res(y)$ with $f(z)=1$. Hence $f^\nabla(x)=1$.
\end{proof}

\begin{remark}
    An analogous statement relating the monotone circuit complexity of $f^\nabla$ and hazard-free circuit complexity of $f$ was recently shown by Jukna \cite{Jukna21}.
\end{remark}

\section{Polynomial relationships between deterministic, randomized, and quantum \uquery complexities}\label{sec: poly relationships}
It was shown in a line of work~\cite{Nisan91, NS94, BBC+01} that for all Boolean functions $f : \zone^n \to \zone$, its deterministic, randomized, and quantum query complexities are polynomially related to each other as follows:
The polynomial equivalences can be shown using the following three inequalities:
\begin{enumerate}[(i)]
    \item $\sD(f) \leq \sC(f) \cdot \bs(f)$.
    \item $\sC(f) \leq \bs(f) \cdot \ss(f) \leq \bs(f)^2$.
    \item $\bs(f) = O(\sR(f))$, $\sqrt{\bs(f)} = O(\sQ(f))$.
\end{enumerate}

In this section we show that we can suitably adapt these proofs to also show that the \uquery complexity counterparts are also all polynomially related to each other. While the proofs are similar to the proofs of the above, we do have to take some care. We start by showing $\sDu(f) = O(\usC(f) \cdot \ubs(f))$. Algorithm \ref{algo:UpperBound} is a deterministic query algorithm that achieves this bound, as shown in Theorem \ref{thm: algo upper bound}. Following this, we show \uquery versions of (ii) and (iii) in Lemma \ref{lem: unstable certificate bs s} and Lemma \ref{lem: unstable randomized quantum bs} respectively. 

 \begin{algorithm}[h]
    \begin{algorithmic}[1]
        \State \textbf{Given:} Known $f : \zone^n \to \zone$; Query access to an unknown $x \in \zoneu^n$.
        \State \textbf{Goal:} Output $\uf(x)$
        \State Initialize partial assignment $x^*\gets *^n$
        \For{$i \gets 1$ to $\ubsone(\uf)$}\label{line:firstFor}
            \State{$c \gets$ a minimum 0-certificate of $\uf$ at an arbitrary $x \in \uf^{-1}(0)$ consistent with $x^*$}\label{line: firstCertificate}
            \State $C \gets$ domain of $c$
            \State Query all variables in $C$
            \State Update $x^*$ with the answers from the oracle.
            \If{ $x^*$ is a $0$-certificate of $\uf$}\label{line:firstIf}
                \State \textbf{Output} 0
            \EndIf
            \If{ $x^*$ is a $1$-certificate of $\uf$}\label{line:secondIf}
                \State \textbf{Output} 1
            \EndIf
        \EndFor

        \For{$i \gets 1$ to $\ubszero(\uf)$}\label{line:secondFor}
            \State{$c \gets$ a minimum 1-certificate of $\uf$ at an arbitrary $x \in \uf^{-1}(1)$ consistent with $x^*$}\label{line:secondCert}
            \State $C \gets$ domain of $c$
            \State Update $x^*$ with the answers from the oracle.
            \If{ $x^*$ is a $0$-certificate of $\uf$}\label{line:thirdIf}
                \State \textbf{Output} 0
            \EndIf
        \EndFor
        \State \textbf{Output $\su$}\label{line:lastOutput}

        \caption{\uquery algorithm}
        \label{algo:UpperBound}
    \end{algorithmic}
    \end{algorithm}

Recall that we call a string $y\in\zoneu^n$ \emph{consistent} with a partial assignment $x^*\in \{0,1,\su,*\}^n$ if for all $i\in [n]$, either $x^*_i=y_i$ or $x^*_i=*$. 
We call a string $y\in\zoneu^n$ \emph{inconsistent} with a partial assignment $x^*\in \{0,1,\su,*\}^n$ if there exists some index $i\in [n]$ such that $x_i\neq *$ and $x_i\neq y_i$.

\begin{claim}
    If Algorithm \ref{algo:UpperBound} reaches Line \ref{line:secondFor}, then every $1$-input of $\uf$ is inconsistent with the partial assignment $x^*$ (at Line \ref{line:secondFor}).
    \label{claim:firstInconsistency}
\end{claim}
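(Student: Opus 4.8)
The plan is to argue by contradiction. Suppose the algorithm reaches Line~\ref{line:secondFor} and yet some $z \in \uf^{-1}(1)$ is consistent with the partial assignment $x^*$ present at that point. I will derive a contradiction by exhibiting $\ubsone(\uf)+1$ pairwise disjoint, nonempty sensitive blocks of $\uf$ at $z$; this is impossible since $z$ is a $1$-input and hence $\bs(\uf,z)\le\ubsone(\uf)$. (For non-degenerate, hence non-constant, $f$ one has $\ubsone(\uf)\ge 1$, so the first \textbf{for} loop executes at least once and the degenerate case is not an issue.)

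First I would extract $\ubsone(\uf)$ disjoint sensitive blocks, one per iteration of the first \textbf{for} loop. In iteration $i$, let $c^{(i)}$ be the chosen minimum $0$-certificate, $x^{(i)}\in\uf^{-1}(0)$ the input at which it is a certificate, and $C_i$ its domain; put $C:=\bigcup_i C_i$ (exactly the set of queried coordinates, i.e.\ the non-$*$ coordinates of $x^*$) and $B_i:=\{j\in C_i: z_j\neq x^{(i)}_j\}$. The three facts I would verify are: (a) $B_i\neq\emptyset$ --- otherwise $z$ (and hence $x^*$, since $z$ agrees with $x^*$ on $C\supseteq C_i$) agrees with $c^{(i)}$ on all of $C_i$, so $x^*$ is a $0$-certificate of $\uf$ and iteration $i$ would have output $0$, contradicting that Line~\ref{line:secondFor} is reached; here I use Observation~\ref{obs: unstable certificate no u} to know $c^{(i)}$ has no $\su$ on $C_i$, so agreement on $C_i$ does pin down the $\uf$-value; (b) the $B_i$ are pairwise disjoint --- the $0$-input $x^{(i)}$ must be consistent with the partial assignment held at the start of iteration $i$, which agrees with $x^*$ (and hence with $z$) on $C_1\cup\cdots\cup C_{i-1}$, so $B_i$ avoids $C_1\cup\cdots\cup C_{i-1}\supseteq B_1\cup\cdots\cup B_{i-1}$; (c) each $B_i$ is a sensitive block of $\uf$ at $z$ --- the string $z|_{B_i\leftarrow x^{(i)}}$ is consistent with $c^{(i)}$ (it equals $c^{(i)}$ on $B_i$, and equals $z=x^{(i)}=c^{(i)}$ on $C_i\setminus B_i$), hence has $\uf$-value $0\neq 1=\uf(z)$, while differing from $z$ only inside $B_i$.

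Next I would produce one more block, disjoint from $C$. Since iteration $\ubsone(\uf)$ did not output $1$, the partial assignment $x^*$ is not a $1$-certificate of $\uf$, so there is a string $y$ consistent with $x^*$ with $\uf(y)\neq 1$, and by definition of $\uf$ a resolution $r\in\res(y)$ with $f(r)=0$. Define $z'$ to agree with $x^*$ on $C$ and with $r$ off $C$. Then $z'$ agrees with $z$ on $C$, so $D:=\{j:z'_j\neq z_j\}\subseteq[n]\setminus C$ is disjoint from all the $B_i$. The crux is to check $r\in\res(z')$: the $\su$-coordinates of $z'$ all lie in $C$ (off $C$, $z'=r$ is Boolean), and on the remaining coordinates of $C$ the Boolean value $x^*_j=z'_j$ forces $y_j=x^*_j$ to be non-$\su$, so $r_j=y_j=x^*_j=z'_j$ there, while off $C$ the equality $r_j=z'_j$ is immediate. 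Hence $f(r)=0$ gives $\uf(z')\neq 1$, so $z'\neq z$, $D\neq\emptyset$, and $D$ is a sensitive block of $\uf$ at $z$ disjoint from $B_1,\dots,B_{\ubsone(\uf)}$, yielding the $\ubsone(\uf)+1$ blocks and the contradiction.

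The step I expect to be the main obstacle is the construction of this extra block, precisely because the oracle may answer $\su$: the recorded assignment $x^*$ may carry $\su$ entries, so a priori there need not be any $0$-input consistent with $x^*$ (only $\su$-inputs), and one cannot simply ``flip $z$ toward'' a consistent $0$-input as in the classical argument. The remedy --- splicing a resolution $r$ of a consistent non-$1$ string onto the unqueried part of $z$ --- forces one to check carefully that $r$ is genuinely a resolution of the spliced string despite $z'$ inheriting the $\su$'s of $x^*$ on the queried region, which is exactly the interaction between $\res(\cdot)$ and consistency with a three-valued partial assignment. By contrast, the $\ubsone(\uf)$ blocks from the first loop are a fairly routine adaptation of the standard proof of $\sD(f)\le\sC(f)\cdot\bs(f)$.
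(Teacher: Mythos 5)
Your proof is correct and follows essentially the same route as the paper's: one disjoint sensitive block per iteration of the first loop (the positions in $C_i$ where the queried values deviate from the chosen $0$-certificate), plus one extra block supported off the queried coordinates coming from the failure of the $1$-certificate test, contradicting $\bs(\uf,z)\le\ubsone(\uf)$. The only real difference is that your final step is more elaborate than necessary: since Definition~\ref{defn: bs} allows the witness of a sensitive block to lie in $\zoneu^n$, any $y$ consistent with $x^*$ with $\uf(y)\neq 1$ already yields the extra block $\{j : y_j\neq z_j\}$ directly, with no need to splice a resolution $r$ onto the unqueried part of $z$.
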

\begin{proof}
Let $k$ denote $\ubsone(\uf)$, the number of iterations of the \textbf{for} loop on Line \ref{line:firstFor}. Assume the algorithm reached Line \ref{line:secondFor}, and let $x'$ be the partial assignment $x^*$ constructed by the algorithm when it reaches Line \ref{line:secondFor}. Suppose, for the sake of contradiction, there exists a input $y\in \uf^{-1}(1)$ that is consistent with $x'$.

The string $x'$ contains neither a $0$-certificate nor a $1$-certificate for $\uf$ (because otherwise one of the two \textbf{if} conditions on Lines \ref{line:firstIf} and \ref{line:secondIf} would have terminated the algorithm). Suppose the $0$-certificates used during the run of the \textbf{for} loop on Line \ref{line:firstFor} were $c_1, \ldots, c_k$, and their respective domains were $C_1,\ldots, C_k$. Since $x'$ is not a $0$-certificate, it must be the case that for each $i\in [k]$, the string $x'$ and $c_i$ differ in $C_i$. Let $B_i\subseteq C_i$ be the set of positions in $C_i$ where $x'$ differs from $c_i$. Observe that since $c_{i+1}$ is chosen to be consistent with $x^*$, it must be the case that $c_{i+1}$ and $x^*$ agree on all positions in $C_i$. Hence $B_{i+1}$ is disjoint from $B_{i}$. With the same reasoning, we can conclude that $B_{i+1}$ is disjoint from every $B_j$ where $j\le i$.

Observe that for each $i\in [k]$, there is a setting to the bits in $B_i$ such that $x'$ becomes a $0$-input --- simply take the setting of these bits from $c_i$, which is a 0-certificate. More formally, for each $i\in [k]$, there exists a string $\alpha_i\in \zone^{|B_i|}$ such that $f(x'|_{B_i\gets \alpha_i})=0$. 

Since $y$ is consistent with $x'$, it agrees with $x'$ on all positions where $x'\neq *$. This means the previous observation holds for $y$ too. That is, for each $i\in [k]$, there exists strings $\alpha_i\in \zone^{|B_i|}$ such that $f(y|_{B_i\gets \alpha_i})=0$. Recall that $y\in \uf^{-1}(1)$, and hence the sets $B_1,\ldots, B_k$ form a collection of disjoint sensitive blocks for $\uf$ at $y$. Further, since the algorithm has not found a $1$-certificate yet, it must be the case that there is some $0$-input consistent with $x'$. This means there is yet another disjoint block $B_{k+1}$ that is sensitive for $\uf$ at $y$. But this is a contradiction since the maximum, over all $1$-inputs of $\uf$, number of disjoint sensitive blocks is $\ubsone(\uf) = k < k+1$.
\end{proof}

\begin{claim}
If Algorithm \ref{algo:UpperBound} reaches Line \ref{line:lastOutput}, then the partial assignment $x^*$ (at Line \ref{line:lastOutput}) is inconsistent with every $0$-input, and every $1$-input of $\uf$.
    \label{claim:secondInconsistency}
\end{claim}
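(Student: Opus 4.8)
The plan is to mimic the argument of Claim~\ref{claim:firstInconsistency}, but now applied to the second \textbf{for} loop (Line~\ref{line:secondFor}) and to $0$-inputs rather than $1$-inputs. First I would invoke Claim~\ref{claim:firstInconsistency} to note that when the algorithm enters the second loop, the partial assignment $x^*$ is already inconsistent with every $1$-input of $\uf$. Moreover, since updating $x^*$ with more oracle answers only ever extends its domain (never replaces a non-$*$ entry), inconsistency with a given string is preserved: once a $1$-input is inconsistent with $x^*$, it stays inconsistent for the remainder of the run. Hence, at Line~\ref{line:lastOutput}, $x^*$ is automatically inconsistent with every $1$-input, and the only thing left to prove is that it is also inconsistent with every $0$-input.

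For the $0$-inputs, I would argue by contradiction exactly as in Claim~\ref{claim:firstInconsistency}, with the roles of $0$ and $1$ swapped. Suppose the algorithm reaches Line~\ref{line:lastOutput} and let $x''$ be the value of $x^*$ at that point; suppose towards a contradiction that some $y \in \uf^{-1}(0)$ is consistent with $x''$. Let $k' := \ubszero(\uf)$ be the number of iterations of the second loop, and let $c_1, \dots, c_{k'}$ (with domains $C_1, \dots, C_{k'}$) be the minimum $1$-certificates chosen on Line~\ref{line:secondCert}. Since the loop ran to completion without the \textbf{if} on Line~\ref{line:thirdIf} firing, $x''$ is not a $0$-certificate, so for each $i \in [k']$ the strings $x''$ and $c_i$ disagree somewhere in $C_i$; let $B_i \subseteq C_i$ be the set of disagreement positions. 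As in the earlier claim, because $c_{i+1}$ is chosen consistent with the then-current $x^*$, the sets $B_1, \dots, B_{k'}$ are pairwise disjoint, and each $B_i$ admits a setting (taken from $c_i$) making $x''$ — and hence $y$, which agrees with $x''$ off the $*$-positions — evaluate to $1$. Thus $B_1, \dots, B_{k'}$ are $k'$ disjoint sensitive blocks of $\uf$ at $y$.

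To reach a contradiction I now need one more disjoint sensitive block at $y$. Here is where I would use the fact, from Claim~\ref{claim:firstInconsistency} (and its persistence), that every $1$-input is inconsistent with $x''$: there is \emph{no} $1$-input consistent with $x''$, yet there \emph{is} a $0$-input consistent with $x''$, namely $y$ itself. On the other hand, I also need to know that $x''$ is not a $1$-certificate of $\uf$ — which holds because $x'$ at the start of the second loop was not a $1$-certificate (else the loop on Line~\ref{line:firstFor} would have exited via Line~\ref{line:secondIf}), and I should check that none of the extensions performed inside the second loop turn $x^*$ into a $1$-certificate; indeed if some extension made $x^*$ a $1$-certificate, then since $x^*$ is consistent with the $0$-input $y$ we would get $\uf(y) = 1$, a contradiction, so $x''$ is not a $1$-certificate. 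Since $x''$ is neither a $0$-certificate (the loop completed) nor a $1$-certificate, there must be a string $z \in \zoneu^n$ consistent with $x''$ with $\uf(z) = 1$ (if all consistent strings gave $0$ or $\su$... ) — more carefully: not being a $0$-certificate means some consistent string has $\uf$-value $\ne 0$; combined with $y$ being a consistent $0$-input, this yields a sensitive block $B_{k'+1}$ of $\uf$ at $y$ disjoint from $B_1, \dots, B_{k'}$ (it lies outside $C_1 \cup \dots \cup C_{k'}$ since $x''$ is fully determined there). This gives $k'+1$ disjoint sensitive blocks at the $0$-input $y$, contradicting $\bs(\uf, y) \le \ubszero(\uf) = k'$.

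The main obstacle I anticipate is the bookkeeping around the \emph{persistence} of inconsistency and of the ``not a $1$-certificate'' property across the second loop: the earlier claims are stated about a single snapshot of $x^*$, and I must be careful that extending $x^*$ with further queries cannot destroy these properties (it cannot, since extension only adds non-$*$ coordinates), and that the disjointness argument for $B_{k'+1}$ genuinely uses that $x''$ has already fully revealed $C_1 \cup \dots \cup C_{k'}$. A secondary subtlety, exactly paralleling the first claim, is confirming that the minimality/consistency of the certificates chosen on Line~\ref{line:secondCert} forces the $B_i$'s to be pairwise disjoint; this is the same argument as before and should transfer verbatim with $0$ and $1$ interchanged.
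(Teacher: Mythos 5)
Your proposal matches the paper's intent exactly: the paper itself only says that the $1$-input part follows from Claim~\ref{claim:firstInconsistency} and that the $0$-input part follows by a ``nearly identical proof'' with $0$ and $1$ interchanged, and your write-up carries out precisely that mirroring, including the useful observation that inconsistency persists under extensions of $x^*$. Your cleaner route to the extra block $B_{k'+1}$ (from any consistent string with $\uf$-value $\ne 0$, rather than specifically a $1$-input) is a valid tightening of the original argument.

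There is, however, one misattached justification. You write that ``$x''$ is not a $0$-certificate, so for each $i \in [k']$ the strings $x''$ and $c_i$ disagree somewhere in $C_i$.'' This implication does not hold: the $c_i$ on Line~\ref{line:secondCert} are $1$-certificates, so if $x''$ agreed with $c_i$ on all of $C_i$ then $x''$ would extend $c_i$ and hence be a \emph{$1$-certificate}; the hypothesis you need here is therefore ``$x''$ is not a $1$-certificate,'' not ``$x''$ is not a $0$-certificate.'' You do in fact establish that $x''$ is not a $1$-certificate a few lines later (since the $0$-input $y$ is consistent with $x''$), so the fix is purely one of reorganization: invoke that fact \emph{before} defining the $B_i$, and reserve ``$x''$ is not a $0$-certificate'' (which follows from Line~\ref{line:thirdIf} never firing) solely for producing the final block $B_{k'+1}$. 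With that swap the argument is correct and complete.
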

\begin{proof}
The fact that every $1$-input is inconsistent with $x^*$ follows from the previous claim. Showing that every $0$-input is inconsistent with $x^*$ can be done using a nearly identical proof as that of the last claim, and we omit it.
\end{proof}

\begin{theorem}\label{thm: algo upper bound}
    Algorithm~\ref{algo:UpperBound} correctly computes $\uf$, and makes at most $O(\usC(f)\ubs(f))$ queries. Thus $\sDu(f) = O(\usC(f) \cdot \ubs(f))$.
\end{theorem}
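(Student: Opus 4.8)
The plan is to verify separately that Algorithm~\ref{algo:UpperBound} (a) always outputs the correct value $\uf(x)$, and (b) makes at most $O(\usC(f)\cdot\ubs(f))$ queries; the query bound then immediately gives $\sDu(f) = O(\usC(f)\cdot\ubs(f))$ since the algorithm is deterministic.

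For correctness, I would case on where the algorithm halts. If it outputs $0$ or $1$ at one of the \textbf{if} statements on Lines~\ref{line:firstIf}, \ref{line:secondIf}, or \ref{line:thirdIf}, then by construction $x^*$ is a $b$-certificate of $\uf$ consistent with the true input $x$ (every variable in the domain of $x^*$ was actually queried), so by the definition of a certificate $\uf(x) = b$; this direction is routine. The subtle case is the final \textbf{Output} $\su$ on Line~\ref{line:lastOutput}. Here I invoke Claim~\ref{claim:secondInconsistency}: if the algorithm reaches Line~\ref{line:lastOutput}, then $x^*$ is inconsistent with every $0$-input and every $1$-input of $\uf$. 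Since the true input $x$ is consistent with $x^*$ (all answers came from the oracle), $x$ cannot lie in $\uf^{-1}(0)$ or $\uf^{-1}(1)$, hence $\uf(x) = \su$, as output. One should also check that the certificates demanded on Lines~\ref{line: firstCertificate} and \ref{line:secondCert} actually exist whenever those lines are reached: if the first \textbf{for} loop is still running and has not yet terminated at an \textbf{if}, then $x^*$ is not yet a $0$- or $1$-certificate, so in particular there is some $0$-input of $\uf$ consistent with $x^*$ (otherwise $x^*$ would vacuously be, say, a $1$-certificate), and likewise for the second loop there is a $1$-input consistent with $x^*$; this is where Claims~\ref{claim:firstInconsistency} and~\ref{claim:secondInconsistency} do the real work, and I would cite them rather than re-prove them.

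For the query bound, observe that the first \textbf{for} loop runs at most $\ubsone(\uf) \le \ubs(f)$ times and the second at most $\ubszero(\uf) \le \ubs(f)$ times, so there are $O(\ubs(f))$ iterations in total. In each iteration the algorithm queries exactly the variables in the domain $C$ of a \emph{minimum} certificate of $\uf$ at some $0$- or $1$-input, and by definition of $\usC(f)$ (Definition~\ref{defn: ucert}) we have $|C| = \sC(\uf, x') \le \usC(f)$. Hence the total number of queries is at most $(\ubsone(\uf) + \ubszero(\uf))\cdot \usC(f) = O(\ubs(f)\cdot\usC(f))$, giving $\sDu(f) = O(\usC(f)\cdot\ubs(f))$.

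The main obstacle — really the crux of the whole argument — is already isolated in Claims~\ref{claim:firstInconsistency} and~\ref{claim:secondInconsistency}: proving that after $\ubsone(\uf)$ rounds of the first loop no $1$-input remains consistent with $x^*$ (and symmetrically for the second loop). The disjointness argument there is delicate: one must track that the ``discrepancy blocks'' $B_i \subseteq C_i$ on which $x^*$ differs from the chosen certificate $c_i$ are pairwise disjoint (which holds because each later certificate is chosen consistent with the current $x^*$, so it agrees with $x^*$ on all previously queried positions), that each $B_i$ is a genuine sensitive block of $\uf$ at the hypothetical surviving $1$-input $y$, and that one extra sensitive block is still available if the loop has not self-terminated — contradicting $\ubsone(\uf) = k$. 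Since those claims are stated and proved earlier in the excerpt, in the body of this proof I would simply assemble them: reaching Line~\ref{line:secondFor} kills all $1$-inputs (Claim~\ref{claim:firstInconsistency}), reaching Line~\ref{line:lastOutput} kills all $0$- and $1$-inputs (Claim~\ref{claim:secondInconsistency}), and the remaining bookkeeping on iteration counts and certificate sizes is immediate.
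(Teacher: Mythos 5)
Your proposal is correct and follows essentially the same route as the paper's proof: trivial correctness when one of the \textbf{if} conditions fires, Claim~\ref{claim:secondInconsistency} for the $\su$ output on Line~\ref{line:lastOutput}, and the (number of iterations) $\times$ (certificate size) count for the query bound. The only quibble is your parenthetical justification that a $0$-input consistent with $x^*$ must exist whenever Line~\ref{line: firstCertificate} is reached --- the absence of consistent $0$-inputs does not by itself make $x^*$ a $1$-certificate, since $\su$-inputs may remain consistent --- but this concerns the well-definedness of that line of the algorithm, a point the paper's own proof also leaves implicit.
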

\begin{proof}
    If the algorithm outputs a value in $\zone$, then it must have passed the corresponding \textbf{if} condition (either in Line~\ref{line:firstIf}, or in Line~\ref{line:secondIf}, or in Line~\ref{line:thirdIf}), and is trivially correct. If the algorithm outputs $\su$, then it must have reached Line \ref{line:lastOutput}. From Claim \ref{claim:secondInconsistency}, we conclude that every $0$-input and every $1$-input of $\uf$ must be inconsistent with  the partial assignment $x^*$ constructed by the algorithm when it arrives at Line \ref{line:lastOutput}. This means that every $x \in \zoneu^n$ that is consistent with $x^*$ (in particular, the unknown input $x$) must satisfy $\uf(x) = \su$, which concludes the proof of correctness.

    The \textbf{for} loop on Line \ref{line:firstFor} runs for $\ubsone(f)= O(\ubs(f))$ many iterations, and at most $\usCzero(f)\leq \usC(f)$ many bits are queried in each iteration. So the first \textbf{for} loop makes a total of $O(\usC(f)\ubs(f))$ many queries. 
    Similarly, the number of queries by the second \textbf{for} loop on \ref{line:secondFor} is $O(\usC(f)\ubs(f))$, making the total number of bits queried by the algorithm as $O(\usC(f)\ubs(f))$.
\end{proof}

\begin{lemma}\label{lem: unstable certificate bs s}
    Let $f : \zone^n \to \zone$. Then $\usC(f) = O(\ubs(f) \cdot \us(f))$.
\end{lemma}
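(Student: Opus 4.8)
The plan is to mimic the classical proof that $\sC(f) = O(\bs(f)\cdot \ss(f))$, adapting it to the $\su$-setting. Fix an input $x \in \zoneu^n$ with $\uf(x) = b \in \zone$; I want to build a certificate for $\uf$ at $x$ of size $O(\ubs(f)\cdot \us(f))$. The classical argument takes a maximal collection of disjoint minimal sensitive blocks $B_1,\dots,B_k$ (so $k \le \ubs(f)$), lets $C = B_1 \cup \dots \cup B_k$, and argues that (a) fixing the coordinates of $x$ on $C$ already certifies the value, and (b) each $B_i$ has size at most the relevant sensitivity measure, so $|C| \le k \cdot \us(f) \le \ubs(f)\cdot\us(f)$. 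I would carry out exactly these two steps.

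For step (a): suppose, toward a contradiction, that the partial assignment $p$ agreeing with $x$ on $C$ and set to $*$ elsewhere is \emph{not} a certificate. Then there is some $y$ consistent with $p$ (so $y_C = x_C$) with $\uf(y) \ne \uf(x) = b$. I claim $[n]\setminus C$ is then a sensitive block of $\uf$ at $x$, and it contains a \emph{minimal} sensitive block $B_{k+1}$; since $y$ differs from $x$ only outside $C$, this $B_{k+1}$ is disjoint from $B_1,\dots,B_k$, contradicting maximality of the collection. The one point to be careful about here, and where the $\su$-logic differs from the Boolean case, is that $\uf$ is three-valued, so "sensitive block at $x$" just means some coordinate change flips the output to a \emph{different} value in $\zoneu$ — this is already how Definition~\ref{defn: bs} is set up, so the argument goes through verbatim: $[n]\setminus C$ being sensitive at $x$ is witnessed by $y$, and any minimal sensitive subblock works.

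For step (b): I need that each minimal sensitive block $B_i$ of $\uf$ at $x$ has $|B_i| \le \us(f)$. Let $z = x^{B_i}$ be the witness, so $z_{[n]\setminus B_i} = x_{[n]\setminus B_i}$ and $\uf(z) \ne \uf(x)$. By minimality of $B_i$, for every $j \in B_i$ the coordinate $j$ is sensitive for $z$ with respect to $\uf$: flipping $z_j$ back to $x_j$ must change $\uf(z)$, since otherwise $B_i \setminus \{j\}$ would already be a sensitive block at $x$ (one has to check this implication using that changing a single coordinate to a specific value, here $x_j$, is an allowed move — it is, as Definition~\ref{defn: bs} allows arbitrary $\zoneu$-values). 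Hence $B_i$ is a set of $|B_i|$ coordinates each individually sensitive for $z$, giving $\us(f) \ge \bs(\uf, z) \ge \us(\uf, z) \ge |B_i|$. Wait — more directly, $|B_i|$ distinct coordinates sensitive at the single input $z$ gives $|B_i|$ disjoint size-$1$ sensitive blocks at $z$, so $|B_i| \le \us(f)$ by definition.

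The main obstacle I anticipate is being careful that "sensitivity" in the $\su$-model behaves well with respect to the particular moves needed in step (b) — specifically that "minimal sensitive block $B_i$" really does force \emph{every} coordinate of $B_i$ to be sensitive at the witness $z$ (and not merely sensitive in some weaker averaged sense). This requires checking that $\res$ and the hazard-free extension interact correctly: changing $z_j$ from its value back to $x_j$ is a legitimate single-coordinate perturbation, and if it did \emph{not} change $\uf$, then the block $B_i \setminus \{j\}$ together with the value assignment from $z$ on those coordinates would still be a sensitive block of $\uf$ at $x$, contradicting minimality of $B_i$. Once this is pinned down, combining (a) and (b) gives $\usCb(f) \le |C| \le k\cdot \us(f) \le \ubs(f)\cdot\us(f)$ for both $b \in \zone$, and hence $\usC(f) = O(\ubs(f)\cdot\us(f))$.
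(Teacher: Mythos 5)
Your proposal is correct and matches the paper's proof essentially step for step: take a maximal collection of disjoint minimal sensitive blocks $B_1,\dots,B_k$, show each $|B_i|\le\us(f)$ by using minimality to make every coordinate of $B_i$ sensitive at the witness $x^{B_i}$, and show $\bigcup_i B_i$ is a certificate by noting a counterexample would yield a fresh disjoint sensitive block, contradicting maximality. The one aside to ignore is the stray inequality chain $\us(f)\ge\bs(\uf,z)\ge\dots$ (sensitivity does not dominate block sensitivity), but you immediately replace it with the correct direct count of disjoint singleton sensitive blocks at $z$.
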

\begin{proof}
    Fix $x \in \zoneu^n$. Let $B_1, \dots, B_k$ be a maximal set of minimal disjoint sensitive blocks of $\uf$ at $x$. By the definition of $\ubs(f)$, we have $k \leq \ubs(f)$. We first show that for all $i \in [k]$, $|B_i| \leq \us(f)$. Towards this, fix $i \in [k]$. We have $b = \uf(x) \neq \uf(x^{B_i}) = b'$. Furthermore, for any $j \in B_i$, if we denote by $y$ the input obtained by flipping the value of the $j$'th coordinate of $x^{B_i}$ back to the value $x_j$, then we must have $\uf(y) = b$. This is because $B_i$ was assumed to be a \emph{minimal} sensitive block; if $\uf(y) \neq b$, then the set $B_i \setminus \cbra{j}$ would have been a smaller sensitive block for $\uf$ at $x$ than $B_i$. In particular, this means each $j \in B_i$ is sensitive for $x^{B_i}$.

    Next denote $S = \bigcup_{i = 1}^kB_i$. We claim that $x_S$ is a certificate for $\uf$ at $x$. Towards a contradiction, suppose $x_S$ is not a certificate for $\uf$ at $x$. This implies existence of a $y \in \zoneu^n$ such that $y_S = x_S$ and $\uf(y) \neq \uf(x)$. Let $B' = \cbra{i \in [n] : x_i \neq y_i}$. Thus $B'$ is a sensitive block for $\uf$ at $x$. Furthermore, $B' \cap S = \emptyset$, which means $B' \cap B_i = \emptyset$ for all $i \in [k]$. This contradicts our assumption that $B_1, \dots, B_k$ was a \emph{maximal} set of minimal disjoint sensitive blocks for $\uf$ at $x$.

    Combining the above, we conclude that $S = \bigcup_{i = 1}^kB_i$ is a certificate for $\uf$ at $x$, and its size is at most $k \cdot \us(f) \leq \ubs(f) \cdot \us(f)$, concluding the proof of the lemma.
\end{proof}

\begin{lemma}\label{lem: unstable randomized quantum bs}
    Let $f : \zone^n \to \zone$. Then,
    \[
    \sRu(f) = \Omega(\ubs(f)), \qquad \sQu(f) = \Omega(\sqrt{\ubs(f)}).
    \]
\end{lemma}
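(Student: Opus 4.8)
The plan is to re-run, directly for $\uf$, the two standard arguments that lower-bound (standard) block sensitivity by randomized and by quantum query complexity: Yao's minimax principle (Lemma~\ref{lem: yao}) for the randomized bound, and the adversary method (Theorem~\ref{thm: ambainis}) for the quantum one. Fix an input $x \in \zoneu^n$ realizing $k := \ubs(f) = \bs(\uf, x)$ (the case $k = 0$ being vacuous, assume $k \geq 1$), together with pairwise disjoint sensitive blocks $B_1, \dots, B_k$ of $\uf$ at $x$ and the associated strings $x^{B_1}, \dots, x^{B_k} \in \zoneu^n$, so that each $x^{B_i}$ agrees with $x$ outside $B_i$ and $\uf(x^{B_i}) \neq \uf(x) =: b_0$. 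Since $x^{B_i} \neq x$ (as their $\uf$-values differ) while $x^{B_i}$ agrees with $x$ off $B_i$ and $x^{B_j}$ agrees with $x$ on $B_i$ for $j \neq i$, the $k+1$ strings $x, x^{B_1}, \dots, x^{B_k}$ are pairwise distinct and any two of them differ only within a single one of the disjoint blocks $B_i$. These facts, plus the fact that each $x^{B_i}$ carries a $\uf$-value other than $b_0$, are all the arguments use; the three-valuedness of $\uf$ plays no further role, beyond the (already recorded) point that Yao's principle and the adversary bound apply to functions with output alphabet $\zoneu$, not just $\zone$ (see \cite{SS06}).

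For the randomized bound I would apply Yao's minimax principle to $\uf$ with the distribution $\mu$ on $\zoneu^n$ that puts mass $1/2$ on $x$ and mass $1/(2k)$ on each $x^{B_i}$. Let $T$ be a deterministic decision tree of depth $d$, and let $S$ be the set of coordinates it reads along its computation path on input $x$, so $|S| \leq d$. For every index $i$ with $S \cap B_i = \emptyset$, the strings $x$ and $x^{B_i}$ agree on all of $S$, so $T$ follows the identical path and returns the identical leaf on $x$ and on $x^{B_i}$. Since the $B_i$ are disjoint, at most $d$ of them meet $S$, so $T$ outputs one and the same value on $x$ and on at least $k - d$ of the $x^{B_i}$. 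If that value equals $b_0$ then $T$ is wrong on each of those $\geq k-d$ inputs $x^{B_i}$ ($\mu$-mass at least $(k-d)/(2k)$), and otherwise $T$ is wrong on $x$ ($\mu$-mass $1/2$); either way $T$ errs with probability at least $\min\cbra{1/2,\,(k-d)/(2k)}$, which exceeds $1/3$ as soon as $d < k/4$. Hence $\sD_\mu(\uf) \geq k/4$, and Yao's principle gives $\sRu(f) = \Omega(k) = \Omega(\ubs(f))$. (Running the same argument with $\mu$ concentrated on $x$ shows, in passing, that $\sDu(f) \geq \ubs(f)$.)

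For the quantum bound I would apply the adversary method (Theorem~\ref{thm: ambainis}) to $\uf$ with $X = \cbra{x}$, $Y = \cbra{x^{B_1}, \dots, x^{B_k}}$, and the complete relation $R = X \times Y$; its hypothesis holds because $\uf(x) = b_0 \neq \uf(x^{B_i})$ for every $i$. The lone element of $X$ is $R$-related to all $k$ (distinct) elements of $Y$, giving $m = k$, and each element of $Y$ is $R$-related only to that element, giving $m' = 1$. For $x$ and a coordinate $j \in [n]$, an element $x^{B_i}$ of $Y$ with $x_j \neq x^{B_i}_j$ must have $j \in B_i$, and the blocks are disjoint, so at most one such element exists: $\ell = 1$; and $\ell' = 1$ trivially since $|X| = 1$. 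Feeding these parameters into Theorem~\ref{thm: ambainis} yields $\sQu(f) = \Omega(\sqrt{\ubs(f)})$.

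I do not expect a genuine obstacle: both classical proofs are a few lines each, and every step goes through verbatim once one notices that a sensitive block of $\uf$ at $x$ supplies precisely the combinatorial input the classical argument consumes --- a modification of $x$, supported on a prescribed block, that changes the function value. The only things to handle with a little care are (a) that $\uf$ outputs values in $\zoneu$, so the general-alphabet forms of Yao's minimax and of the adversary bound are the ones to invoke (both already set up in the preliminaries), and (b) the elementary bookkeeping --- pairwise distinctness of the strings $x^{B_i}$ and disjointness of the $B_i$ --- that makes the two counting steps (``at most $d$ blocks meet $S$'' and ``$\ell = 1$'') valid. Equivalently, the whole argument can be packaged as a reduction from the problem of distinguishing $0^k$ from a Hamming-weight-one string (i.e.\ $\OR_k$ under the promise that the input has weight at most $1$), whose randomized and quantum query complexities are $\Theta(k)$ and $\Theta(\sqrt{k})$; but running Yao and the adversary directly on $\uf$ is cleaner and avoids simulating $\uf$'s ternary oracle.
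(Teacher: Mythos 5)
Your proposal is correct and follows essentially the same route as the paper: Yao's minimax principle with the distribution putting mass $1/2$ on $x$ and $1/(2k)$ on each $x^{B_i}$, followed by the same two-case error analysis, and the adversary bound with $X = \{x\}$, $Y = \{x^{B_i}\}$, $R = X \times Y$ and the same parameters $m = k$, $m' = \ell = \ell' = 1$. The only differences are immaterial choices of constants in the Yao step.
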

\begin{proof}
    We use Lemma~\ref{lem: yao} for the randomized lower bound. Let $x \in \zoneu^n$ be such that $\ubs(f) = \bs(\uf, x) = k$, say. Define a distribution $\mu$ on $\zoneu^n$ as follows:
    \begin{align*}
        \mu(x) & = 1/2,\\
        \mu(x^{B_i}) & = 1/2k~\textnormal{for all}~i \in [k].
    \end{align*}
    Towards a contradiction, let $T$ be a deterministic decision tree of cost less than $k/10$ that computes $\uf$ to error at most $1/3$ under the input distribution $\mu$. Let $L_x$ denote the leaf of $T$ reached by the input $x$. There are now two cases:
    \begin{itemize}
        \item If the output at $L_x$ is not equal to $\uf(x)$, then $T$ errs on $x$, which contributes to an error of 1/2 under $\mu$, which is a contradiction.
        \item If the output at $L_x$ equals $\uf(x)$, since the number of queries on this path is less than $k/10$, there must exist at least $9k/10$ blocks $B_i$ from the set $B_1, \dots, B_k$ such that no variable of $x^{B_i}$ is read on input $x^{B_i}$ (observe that in this case, $x^{B_i}$ also reaches the leaf $L_x$). Since $f(x^{B_i}) \neq f(x)$, this means $T$ makes an error on each of these $B_i$'s, contributing to a total error of at least $9k/10 \cdot 1/2k = 0.45$ under $\mu$, which is a contradiction.
    \end{itemize}
    This concludes the randomized lower bound.

    For the quantum lower bound we use Theorem~\ref{thm: ambainis}. Define $X = \cbra{x}, Y = \cbra{x^{B_i} : i \in [k]}$, and $R = X \times Y$. From Theorem~\ref{thm: ambainis} we have $m = k, m' = 1, \ell = 1$ (since each index appears in at most 1 block, as each block is disjoint) and $\ell' = 1$. Theorem~\ref{thm: ambainis} then implies $\sQu(f) = \Omega(\sqrt{k}) = \Omega(\sqrt{\ubs(f)})$.
\end{proof}

We conclude this section by proving Theorem~\ref{thm: main}.
\begin{proof}[Proof of Theorem~\ref{thm: main}]
    Theorem~\ref{thm: algo upper bound} and Lemma~\ref{lem: unstable certificate bs s} imply
    \[
        \sDu(f) = O(\usC(f) \cdot \ubs(f)) = O(\ubs(f)^2 \cdot \us(f)) = O(\ubs(f)^3) = O(\sRu(f)^3),
    \]
    where the final bound is from Lemma~\ref{lem: unstable randomized quantum bs}. Substituting the second bound from Lemma~\ref{lem: unstable randomized quantum bs} in the last equality above also yields $\sDu(f) = O(\sQu(f)^6)$.
\end{proof}

\section{Conclusions and future work}
In this paper we initiated a study of $\su$-query complexity of Boolean functions, by considering a natural generalization of Kleene's strong logic of indeterminacy on three variables. On the one hand, we showed that the Indexing function which admits a logarithmic-cost query protocol in the usual query model, has linear query complexity in the $\su$-query model. 
On the other hand, we showed that a monotone variant of it admits an efficient \uquery protocol. More generally, we showed that for monotone Boolean functions, the $\su$-query complexity and usual query complexity are within a factor of two of each other.

We showed using $\su$-analogs of block sensitivity, sensitivity, and certificate complexity that for all Boolean functions $f$, its deterministic, randomized, and quantum $\su$-query complexities are polynomially related to each other. An interesting research direction would be to determine the tightest possible separations between all of these measures. It is interesting to note that our cubic relationship between deterministic and randomized $\su$-query complexity matches the best-known cubic relationship in the usual query models. However, it is known that $\sD(f) = O(\sQ(f)^4)$~\cite{ABKRT21}, which is essentially the best separation possible in the usual query model~\cite{ABBLSS17}. This proof goes via giving a lower bound on quantum query complexity of a Boolean function by the intermediate measure of its \emph{degree}.
It is interesting to see if the techniques of~\cite{ABBLSS17} can be adapted to show a fourth-power relationship between deterministic and quantum $\su$-query complexities in our setting as well.
It would also be interesting to see if $\su$-query complexity of Boolean functions can be characterized by more standard measures of complexity of Boolean functions. More generally, it would be interesting to see best known relationships between combinatorial measures of Boolean functions in this model, and see how they compare to the usual query model (see, for instance,~\cite[Table~1]{ABKRT21}).

While we studied an important extension of Boolean functions to a specific three-valued logic that has been extensively studied in various contexts, an interesting future direction is to consider query complexities of Boolean functions on other interesting logics. Our definition of $\uf$ dictates that $\uf(x) = b \in \zone$ iff $f(y) = b$ for all $y \in \res(x)$, and $\uf(x) = \su$ otherwise. A natural variant is to define a 0-1 valued function that outputs $b \in \zone$ iff majority of $f(y)$ equal $b$ over all $y \in \res(x)$. The same proof technique we used in Lemma~\ref{lem: unstable indexing lower bounds} can easily be extended to show that the complexity of this variant of $\IND_n$ is bounded from below by the usual query complexity of Majority on $2^n$ variables, which is $\Omega(2^n)$ in the deterministic, randomized, and quantum query models.

\section*{Acknowledgements} We thank anonymous reviewers for several useful comments.

\bibliography{bibo}

\end{document}